\newtheorem{theorem}{Theorem}[section]
\newtheorem{proposition}[theorem]{Proposition}
\newtheorem{lemma}[theorem]{Lemma}
\newtheorem{remark}[theorem]{Remark}
\newtheorem{corollary}[theorem]{Corollary}
\newtheorem{definition}[theorem]{Definition}
\begin{document}

\title[Bakry-\'Emery black holes]{Bakry-\'Emery black holes}

\author{M Rupert and E Woolgar}
\address{Department of Mathematical and Statistical Sciences,
University of Alberta, Edmonton, Alberta, Canada T6G 2G1}
\email{mrupert@ualberta.ca, ewoolgar(at)ualberta.ca}

\date{\today}

\begin{abstract}
\noindent Scalar-tensor gravitation theories,
such as the Brans-Dicke family of theories,
are commonly partly described by a modified Einstein equation in which the
Ricci tensor is replaced by the Bakry-\'Emery-Ricci tensor of a Lorentzian metric and scalar field. In physics this formulation is sometimes referred to as the ``Jordan frame''. Just as, in General Relativity, natural energy conditions
on the stress-energy tensor become conditions on the Ricci tensor,
in scalar-tensor theories expressed in the Jordan frame natural energy conditions become conditions on the Bakry-\'Emery-Ricci tensor. We show that,
if the Bakry-\'Emery tensor obeys the null energy condition with an
upper bound on the Bakry-\'Emery scalar function, there is a
modified notion of apparent horizon which obeys analogues of
familiar theorems from General Relativity. The Bakry-\'Emery
modified apparent horizon always lies behind an event horizon and the event horizon obeys a modified area theorem. Under more restrictive conditions, the modified apparent horizon obeys an analogue of the Hawking topology theorem in 4 spacetime dimensions. Since topological censorship is known to yield a horizon topology theorem independent of the Hawking theorem, in an appendix we obtain a Bakry-\'Emery version of the topological censorship theorem. We apply our results to the Brans-Dicke theory, and obtain an area theorem for horizons in that theory. Our theorems can be used to understand behaviour observed in numerical simulations by Scheel, Shapiro, and Teukolsky \cite{SST} of dust collapse in Brans-Dicke theory.
\end{abstract}

\maketitle

\section{Introduction}
\setcounter{equation}{0}

\noindent It is a fortunate circumstance that many of the tools of
Riemannian comparison geometry carry over to the Lorentzian case. In
particular, the basic scalar Riccati equation estimate for the mean
curvature of hypersurfaces becomes an estimate for the Raychaudhuri
equation, which leads to several important theorems in general
relativity, including singularity theorems and much of the general theory of
black holes and of cosmology as well.

In the past decade, a comparison theory for the Bakry-\'Emery-Ricci
tensor has emerged (\cite{WW}, see also \cite{Lott}). The
\emph{Bakry-\'Emery-Ricci tensor} (or simply \emph{Bakry-\'Emery
tensor}) arises in Riemannian manifolds that have a preferred scalar
function $f$ or, equivalently, a distinguished measure, and is defined by
\begin{equation}
\label{eq1.1} R^f_{ij}:=R_{ij}[g]+\nabla_i\nabla_j f\ ,
\end{equation}
where $R_{ij}[g]={\rm Ric}[g]$ is the Ricci tensor of the Riemannian
metric $g$ and $\nabla_i\nabla_j f ={\rm Hess} f$ is the Hessian
defined by the Levi-Civit\`a connection $\nabla$ of $g$.

Case \cite{Case} has shown that certain aspects of the Bakry-\'Emery
comparison theory carry over to the case of a Lorentzian metric,
allowing him to prove a Bakry-\'Emery version of the
Hawking-Penrose singularity theorem for general relativity. It was
observed in \cite{Woolgar} that this leads to a Hawking-Penrose
theorem for scalar-tensor gravitation theories in the so-called
Jordan frame formulation. It then seems reasonable to ask whether
more of the Bakry-\'Emery theory can be applied to the Lorentzian
case, and whether this can be used to bring the theory of black
holes in scalar-tensor theories to a state of development similar to
that which has been achieved in the case of pure general relativity.

Here we show that significant portions of the theory of black holes
can be adapted to a (Lorentzian) Bakry-\'Emery formulation. In
general relativity, energy conditions (positivity conditions on
components of the Ricci or Einstein tensor) lead to theorems that
govern apparent and event horizons. We show that when energy conditions
are applied instead to Bakry-\'Emery-Ricci tensor (or, as the case
may be, the Bakry-\'Emery version of the Einstein tensor), and when
$f$ obeys appropriate conditions as well, one can prove analogous
theorems governing the so-called $f$-modified apparent horizon.

We refer the reader to \cite{HE} and \cite{Wald} for the standard approach to the theory of black holes, including definitions and appropriate background needed below. One first needs the notion of an exterior region that is \emph{asymptotically flat} \cite[chapter 11]{Wald} and \emph{strongly asymptotically predictable} \cite[p 299]{Wald}. These exterior regions are called \emph{domains of outer communications} and are defined by
\begin{equation}
\label{eq1.2} {\mathcal D}:=I^+({\mathcal I}^-)\cap I^-({\mathcal I}^+)\ .
\end{equation}
Here ${\mathcal I}:={\mathcal I}^-\cup \{ i^0 \} \cup {\mathcal I}^+$
denotes a connected component of \emph{conformal infinity} (see
\cite[chapter 11]{Wald}), where ${\mathcal I}^+$ is future null infinity, ${\mathcal I}^-$ is past null infinity, and $i^0$ is spatial infinity. Also, $I^+(X)$ denotes the chronological future (past) of the set $X$; dually, $I^-(X)$ denotes the chronological past of $X$. Finally, we will augment asymptotic flatness by also requiring that the directional derivative $\nabla_l f$ of $f$ in any future-directed null direction $l$ must vanish on approach to ${\mathcal I}$.

Now consider a closed, spacelike, codimension 2 surface embedded in $(M,g,f)$. There are two linearly independent future-null vector fields orthogonal to this surface, denoted $l^{(i)}$ for $i\in \{1,2 \}$ (we sometimes write $l:=l^{(1)}$, $k:=l^{(2)}$ instead). The surface is called $f$-trapped if
\begin{equation}
\label{eq1.3} \theta^{(i)}_f:=\theta^{(i)}-\nabla_{l^{(i)}} f \le 0
\end{equation}
everywhere on the surface for each $i\in \{1,2 \}$, where
$\theta^{(i)}$ is the expansion scalar associated to $l^{(i)}$. We
will not always repeat the definitions of the terms \emph{trapped
surface}, \emph{marginally trapped surface}, \emph{outer trapped
surface}, \emph{apparent horizon}, etc, as these are well-explained
in standard texts, e.g., \cite[chapter 12]{Wald} and \cite[chapter
9]{HE}, but we will modify these terms with an $f$ (e.g.,
\emph{$f$-trapped surface}) to mean that the condition
$\theta^{(i)}=0$ or $\theta^{(i)}\le 0$, as appropriate, in the
conventional definition is replaced by $\theta^{(i)}_f=0$ or
$\theta^{(i)}_f\le 0$, respectively.

We obtain three general theorems related to well-known theorems in
general relativity. The spirit of these theorems is that what is
true for apparent horizons under standard energy conditions is true
for $f$-apparent horizons under Bakry-\'Emery energy conditions
together with additional conditions on $f$. The reasonableness of
these additional conditions is then the interesting question and one
for which considerations of physical applicability are important.
With this in mind, a fourth theorem casts our results in the realm
of scalar-tensor gravitation theory.

\begin{theorem}\label{theorem1.1}
Let ${\mathcal D}$ be an asymptotically flat and strongly
asymptotically predictable connected component of a spacetime
$(M,g,f)$. Assume that the $f$-modified null energy condition
\begin{equation}
\label{eq1.4} R^f_{ij}l^il^j\ge 0
\end{equation}
holds for all future-null vectors $l$ in $TM$ and that $\nabla_l
f\to 0$ on approach to ${\mathcal I}^+$. Assume also that there is a
$k\in {\mathbb R}$ such that $f\le k$. Then no closed $f$-trapped
surface and no $f$-apparent horizon intersects ${\mathcal D}$.
\end{theorem}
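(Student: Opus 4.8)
The plan is to adapt the standard general-relativistic argument (as in \cite[Prop.~12.2.2]{Wald} and \cite[Ch.~9]{HE}) that a trapped surface cannot be visible from null infinity, replacing the Raychaudhuri focusing of the \emph{true} expansion by a monotonicity statement for the $f$-modified expansion $\theta_f=\theta-\nabla_l f$. First I would fix the outgoing orthogonal null congruence $l:=l^{(1)}$, extend it to an affinely parametrised null geodesic field off the surface (so $l^a\nabla_a l^b=0$), and compute the evolution of $\theta_f$ along the generators. Since $\frac{d}{d\lambda}(\nabla_l f)=(\nabla_a\nabla_b f)\,l^al^b$ for a geodesic field, subtracting this from the null Raychaudhuri equation yields the \emph{$f$-Raychaudhuri equation}
\begin{equation}
\frac{d\theta_f}{d\lambda}
= -\tfrac{1}{n-2}\,\theta^2 - \sigma_{ab}\sigma^{ab} - R^f_{ab}l^al^b ,
\end{equation}
where $n$ is the spacetime dimension and $\sigma_{ab}$ is the shear. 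The $f$-null energy condition \eqref{eq1.4} then forces $\frac{d\theta_f}{d\lambda}\le -\tfrac{1}{n-2}\theta^2\le 0$, so $\theta_f$ is non-increasing along each generator; in particular the trapping inequality $\theta_f\le 0$ propagates to the future.

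The second step is to convert this monotonicity into control of the cross-sectional geometry. Writing $\mathcal A$ for the area element of the congruence, so that $\theta=\frac{d}{d\lambda}\ln\mathcal A$, the weighted combination $\mathcal A_f:=e^{-f}\mathcal A$ satisfies $\frac{d}{d\lambda}\ln\mathcal A_f=\theta_f$. Hence $\mathcal A_f$ is non-increasing and bounded above by its value on the surface. This is precisely where the hypothesis $f\le k$ enters: since $e^{-f}\ge e^{-k}$, we obtain $\mathcal A\le e^{k}\mathcal A_f\le e^{k}\,\mathcal A_f|_{\lambda=0}$, so the \emph{genuine} cross-sectional area stays bounded along the whole outgoing congruence. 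The assumption $\nabla_l f\to 0$ on approach to $\mathcal I^+$ plays the complementary role: near $\mathcal I^+$ it makes $\theta_f$ and $\theta$ asymptotically equal, so that the standard asymptotically flat behaviour (outgoing rays expand, $\theta>0$ and $\mathcal A\to\infty$) would force $\theta_f>0$, contradicting $\theta_f\le 0$.

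With these ingredients I would run the causal-structure argument. Since $\mathcal D\subseteq I^-(\mathcal I^+)$, it suffices to preclude $T\cap I^-(\mathcal I^+)\neq\emptyset$. Assuming a closed $f$-trapped surface $T$ meets this set, some future-directed orthogonal null generator of $\dot J^+(T)$ is future-complete and approaches $\mathcal I^+$; along it $\theta_f\le 0$, so the bounded-area conclusion above is incompatible with the unbounded growth of $\mathcal A$ forced by asymptotic flatness (equivalently, $\theta_f\le 0$ is incompatible with $\theta_f\to\theta>0$ near $\mathcal I^+$). Compactness of $T$ together with strong asymptotic predictability are used here exactly as in \cite{Wald,HE}, to guarantee that $\dot J^+(T)$ is generated by orthogonal null geodesics with past endpoints on $T$ and that the relevant generators are well behaved. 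The same chain of inequalities covers the $f$-apparent horizon without change: there the outgoing congruence is only marginal, $\theta_f=0$, but $\theta_f$ is still non-increasing, so $\mathcal A_f$ remains non-increasing, $\mathcal A$ remains bounded, and the identical contradiction applies.

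The main obstacle, and the reason the extra hypothesis on $f$ is needed, is the mismatch between the two expansions in the $f$-Raychaudhuri equation: the quadratic focusing term is $-\tfrac{1}{n-2}\theta^2$, not $-\tfrac{1}{n-2}\theta_f^2$. Consequently one cannot simply integrate a Riccati inequality in $\theta_f$ alone to produce a focal point in finite affine parameter, as one does in pure general relativity; this is the familiar difficulty in the $\infty$-Bakry-\'Emery comparison theory, where Hessian control without a dimensional bound requires an auxiliary bound on $f$ itself. My strategy circumvents this not by forcing finite-parameter focusing of $\theta_f$, but by tracking the weighted area $\mathcal A_f$ and using $f\le k$ to pass back to the genuine area $\mathcal A$; arranging the bookkeeping so that the monotonicity of $\mathcal A_f$ and the asymptotic vanishing of $\nabla_l f$ combine into a clean contradiction with asymptotic flatness is where the real work lies.
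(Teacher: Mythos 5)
Your monotonicity framework is sound as far as it goes: the $f$-Raychaudhuri inequality, the propagation of $\theta_f\le 0$, and the bookkeeping $\frac{d}{d\lambda}\ln\mathcal{A}_f=\theta_f$ with $\mathcal{A}\le e^{k}\mathcal{A}_f\vert_{\lambda=0}$ are all correct. The gap is in the endgame. Your contradiction rests on the assertion that asymptotic flatness forces $\mathcal{A}\to\infty$ (equivalently $\theta>0$) along the null generators of $\dot J^{+}(T)$ that approach ${\mathcal I}^{+}$. That is not a quotable consequence of the definition of asymptotic flatness: nothing in the definition controls the expansion of \emph{this particular} congruence near ${\mathcal I}^{+}$, and establishing such control is exactly the delicate core of the theorem. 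The paper (following \cite{CGS}) never controls the congruence from $T$ near infinity directly; instead it constructs a cross-section $S^{+}\subset{\mathcal I}^{+}$ through the endpoint $q$ of the generator, shows that the \emph{auxiliary} congruence generating $\partial I^{-}(S^{+})$ has expansion $\theta^{(1)}\ge C>0$, and then transfers this to the congruence from $T$ by an achronality (``bending away'') and maximum-principle argument --- taking care both of possible focal points at $q$ itself (the known gap in older textbook proofs) and of the fact that in the $f$-modified setting one only gets $\theta^{(2)}\le\epsilon$ with $\epsilon>0$, which forces an appeal to \cite[Theorem 2.2]{Galloway2} rather than the strong maximum principle (see Proposition \ref{proposition2.4}). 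As written, your argument assumes the conclusion of this comparison as if it were an axiom, and it is precisely in the marginal case $\theta_f=0$ (the $f$-apparent horizon, the main object of the theorem) that the assertion cannot be short-circuited, since there $\theta_f\le 0$ and $\nabla_l f\to 0$ only give $\theta\lesssim\epsilon$, not a sign.

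Your stated reason for rerouting --- that one ``cannot integrate a Riccati inequality in $\theta_f$ alone to produce a focal point in finite affine parameter'' --- is also mistaken under the hypotheses at hand: the bound $f\le k$ is exactly what restores finite-parameter focusing. From (\ref{eq2.4}) one has $\theta_f'\le-\frac{1}{m}\left(\theta_f^{2}+2\theta_f\nabla_l f\right)$, and the integrating factor $e^{2f/m}$ turns this into (\ref{eq2.5}), $\frac{d}{ds}\left(m\,e^{-2f/m}/\theta_f\right)\ge e^{-2f/m}\ge e^{-2k/m}>0$, which drives $\theta_f$ (hence $\theta$, since $f$ is smooth) to $-\infty$ at finite affine parameter whenever $\theta_f(0)<0$. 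This is the paper's Lemma \ref{lemma2.1}, and it disposes of strictly $f$-trapped surfaces by the standard no-focal-point argument on achronal boundaries, with no asymptotic input beyond predictability; the condition $\nabla_l f\to 0$ is then needed only for the marginal case via the comparison argument above. Your weighted-area bound, while correct, is strictly weaker: a complete generator with bounded (even decaying) cross-sectional area contradicts nothing until the asymptotic positivity of $\theta$ is independently established, which returns you to the gap in the first paragraph.
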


That is, $f$-apparent horizons must lie behind event horizons when
the conditions of the theorem hold.

In the standard theory, the Hawking area theorem arises from the
same basic analysis, essentially an estimate for the Raychaudhuri
equation, that underlies the proof that apparent horizons lie behind
event horizons. In the present case, we define the \emph{$f$-surface
area} or \emph{$f$-volume} of a closed spacelike submanifold $S$ as
\begin{equation}
\label{eq1.5} A_f[S]:=\int_S e^{-f}dS\ ,
\end{equation}
whenever the integral converges, where $dS$ is the volume element
induced on $S$ by the spacetime metric $g$. Then we have the
following result:

\begin{theorem}\label{theorem1.2}
In a strongly asymptotically predictable spacetime, assume that
$f\le k$ and that $R^f_{ij}l^il^j\ge 0$ for every null $l$. Let
${\mathcal H}$ be the black hole event horizon and let $\Sigma_1$
and $\Sigma_2$ be Cauchy surfaces intersecting ${\mathcal H}$ in
closed surfaces $S_1:=\Sigma_1\cap{\mathcal H}$ and
$S_2:=\Sigma_2\cap{\mathcal H}$, such that $S_2$ lies everywhere to
the future of $S_1$. Then $A_f[S_2]\ge A_f[S_1]$.
\end{theorem}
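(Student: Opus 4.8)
The plan is to follow the structure of the classical Hawking area theorem (as in \cite[chapter 12]{Wald}), replacing the expansion $\theta$ by the $f$-expansion $\theta_f=\theta-\nabla_l f$ and the ordinary area element by the $f$-area element $e^{-f}\,dS$. The event horizon ${\mathcal H}=\partial I^-({\mathcal I}^+)$ is a null hypersurface ruled by null geodesic generators, none of which possesses a future endpoint; I parametrise them affinely by $\lambda$ with geodesic tangent $l$. The first step is to record the $f$-modified Raychaudhuri equation. Differentiating $\theta_f=\theta-\nabla_l f$ along a generator, using $\nabla_l l=0$ so that $\tfrac{d}{d\lambda}(\nabla_l f)={\rm Hess}\,f(l,l)$, and combining with the ordinary Raychaudhuri equation for the (twist-free) horizon congruence gives
\begin{equation}
\frac{d\theta_f}{d\lambda}=-\frac{1}{n-2}\theta^2-\sigma_{ij}\sigma^{ij}-R^f_{ij}l^il^j\ ,
\end{equation}
where $n=\dim M$ and I have absorbed $\nabla_l\nabla_l f={\rm Hess}\,f(l,l)$ into $R_{ij}l^il^j$ to form $R^f_{ij}l^il^j$. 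Under the hypothesis $R^f_{ij}l^il^j\ge 0$ and since $\sigma_{ij}\sigma^{ij}\ge 0$, this yields $d\theta_f/d\lambda\le-\tfrac{1}{n-2}\theta^2\le 0$, so $\theta_f$ is non-increasing to the future along each generator.

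The second, conceptually central, observation is that $\theta_f$ is exactly the logarithmic rate of change of the $f$-area element: since $d(dS)/d\lambda=\theta\,dS$ and $d(e^{-f})/d\lambda=-(\nabla_l f)e^{-f}$,
\begin{equation}
\frac{d}{d\lambda}\left(e^{-f}\,dS\right)=\left(\theta-\nabla_l f\right)e^{-f}\,dS=\theta_f\,e^{-f}\,dS\ .
\end{equation}
Hence the monotonicity of $A_f$ will follow once I show that $\theta_f\ge 0$ everywhere on ${\mathcal H}$: each generator then carries a non-decreasing $f$-area element from $S_1$ to $S_2$, and since generators may enter ${\mathcal H}$ between the two cuts but, having no future endpoints, can never leave it, the map along generators embeds $S_1$ into $S_2$ without decreasing $f$-area density, so that $A_f[S_2]\ge A_f[S_1]$.

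The key step is therefore to prove $\theta_f\ge 0$ on ${\mathcal H}$, and this is where I expect the main difficulty. I would argue by contradiction as in the proof of Theorem \ref{theorem1.1}: suppose $\theta_f(\lambda_0)=-c<0$ on some generator. By the inequality above $\theta_f\le -c$ for all $\lambda\ge\lambda_0$, so the $f$-area element decays at least like $e^{-c(\lambda-\lambda_0)}$. The obstacle is that the Raychaudhuri inequality controls $\theta$, not $\theta_f$, and the two differ by $\nabla_l f$; this is precisely the $\theta^2$-versus-$\theta_f^2$ gap familiar from Bakry-\'Emery comparison geometry. Here the hypothesis $f\le k$ is what bridges the gap: because $e^{-f}\ge e^{-k}$, collapse of the $f$-area element forces the ordinary cross-sectional element $dS$ to collapse as well, and feeding this back into the Raychaudhuri inequality drives the congruence to a focal point within finite affine parameter, exactly as in the Bakry-\'Emery focusing argument underlying Theorem \ref{theorem1.1} and Case's singularity theorem \cite{Case}. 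Beyond a focal point a generator enters $I^-({\mathcal I}^+)$ and hence leaves ${\mathcal H}$, contradicting the absence of future endpoints; therefore $\theta_f\ge 0$.

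Finally I would assemble the pieces, integrating the $f$-area identity along generators from $S_1$ to $S_2$ and accounting for the generators that enter between the two cuts, to obtain $A_f[S_2]\ge A_f[S_1]$. Two points demand care: the low regularity of ${\mathcal H}$, which is in general only a $C^0$ null hypersurface, so the Raychaudhuri estimate should be read in the support sense as in rigorous treatments of the classical area theorem; and the precise mechanism by which $f\le k$ produces a focal point at finite parameter. The latter is the genuine obstacle and the place where the Bakry-\'Emery hypotheses do the real work.
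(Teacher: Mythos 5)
Your overall architecture coincides with the paper's: the $f$-area identity $\frac{d}{ds}\left(e^{-f}dS\right)=\theta_f\,e^{-f}dS$ is exactly the paper's equation (\ref{eq2.9}), the injective map of $S_1$ into $S_2$ along generators is the same, and, as in the paper, everything is reduced to showing $\theta_f\ge 0$ along the generators of ${\mathcal H}$ (the paper's Lemma \ref{lemma2.5}). The genuine gaps are both located in your proof of that key step. You draw the contradiction on the horizon generators themselves: focusing produces a focal point at finite affine parameter, beyond which the generator leaves ${\mathcal H}$, contradicting the absence of future endpoints. (Incidentally, beyond a focal point the generator enters $I^+(S_1)$, violating achronality of ${\mathcal H}$; it does not enter $I^-({\mathcal I}^+)$.) But this argument requires the generator to actually exist out to the finite affine parameter at which focusing is predicted: ``no future endpoints'' means generators never leave ${\mathcal H}$, not that they are future complete, and strong asymptotic predictability does not supply completeness. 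This is precisely why the paper's Lemma \ref{lemma2.5}, following \cite[Lemma 9.2.2]{HE} and \cite[Theorem 12.2.6]{Wald}, deforms ${\mathcal H}$ slightly outward into ${\mathcal D}$ while keeping $\theta_f<0$ at some point $p'$: then $\partial I^+(p')$ meets ${\mathcal I}^+$ at some $q$, and the null generator of $\partial I^+(p')$ from $p'$ to $q$ has infinite affine length (it reaches conformal infinity) and is focal-point free (it lies on an achronal boundary), which directly contradicts Lemma \ref{lemma2.1}. Without this deformation no contradiction is obtained.

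Second, the mechanism you propose for finite-parameter focusing --- exponential decay of $e^{-f}dS$, hence of $dS$ via $f\le k$, ``fed back'' into the Raychaudhuri inequality --- does not work as stated: $\theta_f\le -c$ only forces the area element to decay like $e^{-c(\lambda-\lambda_0)}$ as $\lambda\to\infty$, and a quantity with an exponentially decaying upper bound need never vanish at any finite $\lambda$, which is what a focal point requires. The actual mechanism, which you defer to but do not reproduce, is the Riccati comparison in the paper's Lemma \ref{lemma2.1}: rewriting (\ref{eq2.4}) as (\ref{eq2.5}), one gets $\frac{d}{ds}\bigl(m e^{-2f/m}/\theta_f\bigr)\ge e^{-2f/m}\ge e^{-2k/m}>0$, so the initially negative quantity $m e^{-2f/m}/\theta_f$ reaches zero at finite parameter, forcing $\theta_f\to-\infty$ and hence (since $f$ is smooth) $\theta\to-\infty$. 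Since Lemma \ref{lemma2.1} is available from the proof of Theorem \ref{theorem1.1}, this part of your argument can be repaired simply by invoking it; but the contradiction must still be routed through the outward deformation described above, so as written the key step $\theta_f\ge 0$ on ${\mathcal H}$ is not established.
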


Apparent horizons in general relativity have a stability property
analogous to the stability of those minimal surfaces in Riemannian geometry which are genuine local minimizers of the area functional. In particular, there is a stability operator for apparent horizons, whose spectrum must be nonnegative.
In general relativity, this leads to the
Hawking horizon topology theorem \cite[theorem 9.3.2]{HE} (see \cite{GS} for a theorem in general dimension). Likewise in the present case there is a stability operator for $f$-apparent horizons and, in $4$ spacetime dimensions at least, an associated Hawking-type topology
theorem, when certain conditions are imposed on $f$. For example, we have

\begin{theorem}\label{theorem1.3}
Consider an $n=4$ dimensional spacetime. Assume that, for every
pair of future-timelike vectors $v$, $w$, the
$f$-modified Einstein tensor
\begin{equation}
\label{eq1.6} G^f_{\mu\nu}:=R^f_{\mu\nu}-\frac12 g_{\mu\nu}R^f\ ,
\end{equation}
(where $R^f:=g^{\mu\nu}R^f_{\mu\nu}$) obeys
\begin{equation}
\label{eq1.7} G^f_{\mu\nu}\big\vert_S v^{\mu}w^{\nu}\ge 0\ ,
\end{equation}
on an outer $f$-apparent horizon $S$, so
\begin{equation}
\label{eq1.8} \theta_f\big\vert_S:=\theta-\nabla_lf \big\vert_S=0\ ,
\end{equation}
where $l$ is the outbound null direction orthogonal to $S$. Assume
further that $f$ obeys
\begin{equation}
\label{eq1.9} \square f\big\vert_S \ge 0\ ,
\end{equation}
and
\begin{equation}
\label{eq1.10} \nabla_l f\big\vert_S=0 \ .
\end{equation}
Then every such outer $f$-apparent
horizon $S$ is either a 2-sphere or a torus with induced metric $e^{2f}\delta$, where $\delta$ is a flat metric.
\end{theorem}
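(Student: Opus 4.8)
The plan is to transplant the Galloway--Schoen proof of the Hawking topology theorem to the $f$-modified setting, the key object being the stability operator $L^f_S$ for the outer $f$-apparent horizon $S$. Because $S$ is an \emph{outer} $f$-apparent horizon it is $f$-stable, so $L^f_S$ has a real nonnegative principal eigenvalue $\lambda_1$ with positive principal eigenfunction $\psi$. I would first record a simplification peculiar to the present hypotheses: since \eqref{eq1.8} reads $\theta-\nabla_lf=0$ on $S$ while \eqref{eq1.10} gives $\nabla_lf\vert_S=0$, the ordinary outgoing expansion satisfies $\theta\vert_S=0$, so $S$ is simultaneously a marginally outer trapped surface in the usual sense. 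Consequently the mean-curvature contraction in the tangential trace of the spacetime Hessian of $f$ drops out, and (normalizing the null normals by $g(l,k)=-1$) one obtains on $S$ the decomposition
\[
\square f\big\vert_S=-2(\nabla^2f)(l,k)+\Delta_Sf\ ,
\]
where $\Delta_S$ is the Laplacian of the induced metric $h$. This is the identity that ties together the hypotheses \eqref{eq1.7}, \eqref{eq1.9}, \eqref{eq1.10}.

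Next I would derive $L^f_S$ by linearizing $\theta_f=\theta-\nabla_lf$ under a normal variation of $S$: the variation of $\theta$ reproduces the standard MOTS stability operator, while the variation of $-\nabla_lf$ contributes a first-order drift built from $\nabla_Sf$ together with the zeroth-order potential term $(\nabla^2f)(l,k)$. Setting $u=\ln\psi$ in $L^f_S\psi=\lambda_1\psi$ and dividing by $\psi$ gives the Galloway--Schoen identity
\[
P-\lambda_1=\Delta u+|\nabla u-X|^2-\mathrm{div}\,X\ ,\qquad P:=\tfrac12R_S+\Delta_Sf-G^f(l,k)-|\sigma|^2-\tfrac12\square f\ ,
\]
where $X$ is the effective one-form assembled from the normal-bundle torsion and the $f$-drift, $\sigma$ is the shear of $l$ on $S$, and $(\nabla^2f)(l,k)$ has been traded via the decomposition above for $\tfrac12(\Delta_Sf-\square f)$. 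Integrating over $S$ kills $\Delta u$ and $\mathrm{div}\,X$ and leaves $\int_SP\,dS\ge\lambda_1\,\mathrm{Area}(S)\ge0$. Three inputs then close the topological argument: the energy condition \eqref{eq1.7}, stated for future-timelike arguments, extends by continuity to the future-null pair $l,k$ and gives $G^f(l,k)\vert_S\ge0$; the exact term integrates away, $\int_S\Delta_Sf\,dS=0$; and \eqref{eq1.9} gives $\int_S\square f\,dS\ge0$. With Gauss--Bonnet, $\tfrac12\int_SR_S\,dS=2\pi\chi(S)$, these combine to
\[
2\pi\chi(S)\ge\int_S\Bigl(G^f(l,k)+|\sigma|^2+\tfrac12\square f\Bigr)\,dS\ge0\ ,
\]
so $\chi(S)\ge0$ and $S$ is a sphere or a torus.

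For the torus it remains to extract the rigidity. When $\chi(S)=0$ every inequality above is saturated: the nonnegative integrands vanish, so $\sigma\equiv0$ and $G^f(l,k)\vert_S\equiv0$; since $\square f\ge0$ pointwise by \eqref{eq1.9} while its integral is forced to vanish, $\square f\vert_S\equiv0$; moreover $\lambda_1=0$ and the perfect square vanishes, pinning $u$ through $\nabla u=X$. Feeding this back into the pointwise identity collapses it to $\tfrac12R_S+\Delta_Sf=0$ on $S$. Read as a conformal equation in two dimensions, this says exactly that the metric $e^{-2f}h$ has vanishing Gauss curvature; equivalently the induced metric is $h=e^{2f}\delta$ with $\delta$ flat, as asserted.

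The principal obstacle is the derivation in the second paragraph: obtaining $L^f_S$ with the correct coefficients and, above all, the correct sign, so that the transverse Hessian enters as $+(\nabla^2f)(l,k)$ and hence, through the decomposition, as $-\tfrac12\square f$ --- making \eqref{eq1.9} (the bound $\square f\ge0$, rather than the reverse) the hypothesis that is genuinely needed. One must also check that the Galloway--Schoen symmetrization survives the additional $f$-drift, i.e.\ that this drift combines with the normal-bundle torsion into a single effective one-form $X$ whose square and divergence cancel cleanly against the potential, leaving no uncontrolled cross terms. Granting the operator, the continuity extension of the energy condition, the Gauss--Bonnet step, and the conformal reading of the rigidity equation are all routine.
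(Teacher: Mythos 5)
Your proposal is correct, and on the dichotomy (sphere or torus) it is essentially the paper's own argument: one derives the $f$-modified stability operator for the outer $f$-apparent horizon (equations (\ref{eq3.20})--(\ref{eq3.21}) of the paper), uses (\ref{eq1.10}) to kill the $\nabla_lf$ terms, invokes the real nonnegative principal eigenvalue with positive eigenfunction, substitutes $u=\log\psi$, integrates over $S$, and applies Gauss--Bonnet; your choice to keep $-\frac12\square f$ inside the potential and only use (\ref{eq1.9}) after integrating, rather than discarding that term at the operator level as in (\ref{eq3.22}), changes nothing. Where you genuinely differ is the torus rigidity. The paper symmetrizes the non-self-adjoint operator: it multiplies the pointwise inequality (\ref{eq3.24}) by $\psi^2$, integrates by parts to reach (\ref{eq3.26}), concludes via the Rayleigh quotient that the self-adjoint operator $-\Delta_S+Q^f$ has nonnegative lowest eigenvalue ${\hat\lambda}_1$ with positive eigenfunction ${\hat\phi}_1$, conformally rescales the induced metric by $(e^{-f}{\hat\phi}_1)^{2/(n-3)}$, and shows the rescaled scalar curvature (\ref{eq3.29}) is pointwise nonnegative, so that Gauss--Bonnet on the torus forces every term to vanish and yields (\ref{eq3.30}). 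You instead observe that the integrated log-substitution identity is exact, with right-hand side a sum of manifestly nonnegative quantities ($\int_S G^f(l,k)$, $\int_S|\sigma|^2$, $\frac12\int_S\square f$, $\lambda_1\,\mathrm{Area}(S)$, $\int_S|Du-X|^2$), so $\chi(S)=0$ saturates all of them at once and the pointwise identity collapses directly to $R_S+2\Delta_Sf=0$. Both routes end at the same equation and the same two-dimensional conformal reading ($e^{-2f}h$ flat); yours is shorter and avoids the symmetrization and the auxiliary factor ${\hat\phi}_1$, but it is tied to $\dim S=2$, whereas the paper deliberately keeps the Galloway--Schoen machinery in arbitrary dimension as long as possible because that is the form which generalizes to higher-dimensional horizons. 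Two minor points to tighten: your decomposition $\square f\vert_S=-2(\nabla^2f)(l,k)+\Delta_Sf$ requires (\ref{eq1.10}) in addition to $\theta\vert_S=0$, since the mean-curvature contraction contains both a $\theta\nabla_kf$ term and a $\kappa\nabla_lf$ term (cf the paper's (\ref{eq3.13})); and the operator derivation you defer as the ``principal obstacle'' is carried out in the paper's (\ref{eq3.9})--(\ref{eq3.19}), which confirms the coefficients and signs you anticipate --- the potential acquires exactly $+\Delta_Sf-\frac12\square f$ (plus $\nabla_lf$ terms killed by (\ref{eq1.10})), making (\ref{eq1.9}) the correct-sign hypothesis.
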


Of course, although by (\ref{eq1.8}), any outer $f$-apparent horizon $S$ with
$\nabla_lf\big\vert_S=0$ has $\theta=0$, the original Hawking topology theorem cannot be directly applied. First, an outer $f$-apparent horizon need not be an outer apparent horizon but, moreover, the energy condition (\ref{eq1.7}) does not imply that $G_{\mu\nu}\big
\vert_S v^{\mu}w^{\nu}\ge 0$. As well, we draw attention to the borderline case of the toroidal horizon. In General Relativity, it was known that toroidal topology was only possible in the presence of the dominant energy condition if the induced metric on the horizon were flat. (Even this possibility was later ruled out entirely \cite{Galloway1} for outermost apparent horizons.) We see from Theorem \ref{theorem1.3} that in the present case the horizon metric is once again completely determined up to choice of flat torus metric $\delta$, but is now $e^{2f}\delta$ so in general it is not flat.

As the conditions on $f$ in Theorem \ref{theorem1.3} may appear not to be optimal, we should ask whether they are at least physical. To examine this question, we consider scalar-tensor gravitation theory. Scalar-tensor theories are in a sense almost ubiquitous in modern physics. They arise, for example, whenever a Kaluza-Klein type reduction from higher dimensions is employed, including in string theory. The prototypical scalar-tensor theory is Brans-Dicke gravitation theory formulated in $4$ spacetime dimensions, for which the above theorems lead to the following result:

\begin{theorem}\label{theorem1.4}
For some $\omega\in \left ( -\frac32,\infty \right )$,
assume that $(M^4,g,\varphi)$ is a solution of the Brans-Dicke
equations of gravitation in the Jordan frame formulation\footnote
{The Jordan frame formulation is that in which the field equations are as presented in section 5. There is also the Einstein frame formulation, in which the field equations resemble those of general relativity. These formulations are related by a conformal transformation. See \cite{Faraoni} for details.}
such that the matter stress-energy tensor $T_{\mu\nu}$
obeys $T_{\mu\nu}l^{\mu}l^{\nu}\ge 0$ for all future-null vectors
$l$. Define $f=-\log\varphi$ and assume that $\varphi\ge C >0$. Then
\begin{enumerate}
\item[(i)] any Jordan frame $f$-apparent horizon $S$ lies behind an event
    horizon,
\item[(ii)] any Einstein frame apparent horizon must lie behind an event horizon,
    and
\item[(iii)] for $S_1$, $S_2$ as in Theorem \ref{theorem1.2}, we have
\begin{equation}
\label{eq1.11} \int_{S_1}\varphi dS\le \int_{S_2}\varphi dS\ ,
\end{equation}
where $dS$ is the area element induced by $g$.\footnote
{That is, $dS$ is the \emph{Jordan frame} area element, so $\varphi dS$ is the \emph{Einstein frame} area element. Therefore, statement (iii) says that the Einstein frame horizon cross-section area increases.}
\end{enumerate}
Moreover, assume that $S$ is an outer $f$-apparent horizon such that
\begin{eqnarray}
\label{eq1.12} 0&\ge&T\big\vert_S:=g^{\mu\nu}T_{\mu\nu}\big\vert_S\ , \\
\label{eq1.13} 0&\le&\left [ T_{\mu\nu}-\frac12 g_{\mu\nu}
\frac{T}{(3+2\omega)}\right ]_S u^{\mu}v^{\nu}
\end{eqnarray}
for every choice of future-timelike vectors $u$, $v$ in $T_pM$,
$p\in S$, and assume that $\nabla_l f\big \vert_S \equiv
-\frac{1}{\varphi}\nabla_l \varphi \big \vert_S =0$, where $l$ is
the null vector in (\ref{eq1.8}). Then
\begin{enumerate}
\item[(iv)] $S$ is either a $2$-sphere or a $2$-torus, and if it is a $2$-torus then it has induced metric $\delta/\varphi^2$, where $\delta$ is a flat metric on the torus.
\end{enumerate}
\end{theorem}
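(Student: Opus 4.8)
The plan is to realize each part as an instance of the abstract Theorems~\ref{theorem1.1}--\ref{theorem1.3}, verifying their hypotheses directly from the Brans-Dicke field equations, and to use the conformal map to the Einstein frame to reach the full parameter range. First I would record the Jordan-frame equations in the form
\[
G_{\mu\nu}=\frac{8\pi}{\varphi}T_{\mu\nu}+\frac{\omega}{\varphi^2}\Bigl(\nabla_\mu\varphi\nabla_\nu\varphi-\tfrac12 g_{\mu\nu}(\nabla\varphi)^2\Bigr)+\frac{1}{\varphi}\Bigl(\nabla_\mu\nabla_\nu\varphi-g_{\mu\nu}\square\varphi\Bigr),
\]
together with the scalar equation $\square\varphi=\tfrac{8\pi}{3+2\omega}T$, where $(\nabla\varphi)^2:=g^{\alpha\beta}\nabla_\alpha\varphi\nabla_\beta\varphi$. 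Setting $f=-\log\varphi$ gives $f\le-\log C=:k$, so the upper bound required by Theorems~\ref{theorem1.1} and \ref{theorem1.2} holds; moreover $e^{-f}=\varphi$, so the $f$-area $A_f[S]=\int_S\varphi\,dS$ is exactly the integral appearing in (\ref{eq1.11}).

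The key computation is the null-null component of the Bakry-\'Emery tensor. Writing $\nabla_\mu\nabla_\nu f=\varphi^{-2}\nabla_\mu\varphi\nabla_\nu\varphi-\varphi^{-1}\nabla_\mu\nabla_\nu\varphi$ and contracting the field equation with a null $l$ (so every $g_{\mu\nu}$ term drops), the two copies of $\varphi^{-1}\nabla_l\nabla_l\varphi$ cancel and leave
\[
R^f_{\mu\nu}l^\mu l^\nu=\frac{8\pi}{\varphi}T_{\mu\nu}l^\mu l^\nu+\frac{\omega+1}{\varphi^2}(\nabla_l\varphi)^2.
\]
For $\omega\ge-1$ this is nonnegative, so (\ref{eq1.4}) holds and Theorems~\ref{theorem1.1} and \ref{theorem1.2} give (i) and (iii) at once. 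To cover $-\tfrac32<\omega<-1$ I would pass to the Einstein frame $\tilde g=\varphi g=e^{-f}g$: in $n=4$ the conformal factor $\Omega^2=e^{-f}$ gives the expansion identity $\tilde\theta^{(l)}=\theta^{(l)}+2\nabla_l\log\Omega=\theta^{(l)}-\nabla_l f=\theta^{(l)}_f$, so an Einstein-frame (marginally) trapped surface is exactly a Jordan-frame $f$-(marginally) trapped surface, while the event horizon, being fixed by the conformally invariant causal structure, is the same set in both frames. Since $\tilde R_{\mu\nu}l^\mu l^\nu=R^f_{\mu\nu}l^\mu l^\nu+\tfrac12(\nabla_l f)^2=\tfrac{8\pi}{\varphi}T_{\mu\nu}l^\mu l^\nu+\tfrac{\omega+3/2}{\varphi^2}(\nabla_l\varphi)^2\ge0$ for every $\omega>-\tfrac32$, the ordinary null energy condition holds in the Einstein frame, and the standard general-relativistic results (trapped surfaces and apparent horizons lie behind the event horizon, plus the Hawking area theorem) yield (i), (ii) and---because $A_f[S]$ is the Einstein-frame area element---(iii).

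For (iv) I would apply the Bakry-\'Emery Theorem~\ref{theorem1.3} directly; the conformal map is not used here, and indeed, since $\tilde h=\varphi h$, an Einstein-frame flat torus would give only $h=\varphi^{-1}\delta'$, whereas Theorem~\ref{theorem1.3} produces the different normal form $h=e^{2f}\delta=\delta/\varphi^2$ asserted in (iv). I would verify its three hypotheses on $S$. Condition (\ref{eq1.10}) is assumed. For (\ref{eq1.9}), the trace gives $\square f=\varphi^{-2}(\nabla\varphi)^2-\tfrac{8\pi}{(3+2\omega)\varphi}T$; the assumption $\nabla_l\varphi\big\vert_S=0$ forces $\nabla\varphi$ to have no component along the conjugate null normal $k$, so on $S$ it lies in $\mathrm{span}(l)\oplus TS$ and $(\nabla\varphi)^2\big\vert_S\ge0$, while $T\big\vert_S\le0$ by (\ref{eq1.12}) and $3+2\omega>0$, making both terms nonnegative. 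For (\ref{eq1.7}), the same cancellation of Hessian terms together with the scalar equation gives
\[
G^f_{\mu\nu}=\frac{8\pi}{\varphi}\Bigl(T_{\mu\nu}-\tfrac12 g_{\mu\nu}\tfrac{T}{3+2\omega}\Bigr)+\frac{\omega+1}{\varphi^2}\Bigl(\nabla_\mu\varphi\nabla_\nu\varphi-\tfrac12 g_{\mu\nu}(\nabla\varphi)^2\Bigr);
\]
the first bracket is nonnegative on future-timelike pairs by (\ref{eq1.13}), and the second is the stress-energy of a massless scalar field, which satisfies the dominant energy condition and so contributes nonnegatively for $\omega\ge-1$. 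With (\ref{eq1.7}), (\ref{eq1.9}) and (\ref{eq1.10}) in hand, Theorem~\ref{theorem1.3} gives (iv).

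The main obstacle is not any single calculation but reconciling the two natural energy conditions: the Jordan-frame $f$-null energy condition controls only $\omega\ge-1$, whereas the physical range is $\omega>-\tfrac32$. The device that closes this gap is the conformal correspondence, so the delicate points are establishing the four-dimensional identity $\tilde\theta=\theta_f$---hence the exact matching of $f$-apparent horizons with Einstein-frame apparent horizons---and checking that both ``lies behind the event horizon'' and the area monotonicity are genuinely frame independent. For (iv) the subtle step is the sign of $\square f$ on $S$: it is precisely the horizon condition $\nabla_l\varphi\big\vert_S=0$ that excludes a timelike $\nabla\varphi$ and thereby forces $(\nabla\varphi)^2\big\vert_S\ge0$.
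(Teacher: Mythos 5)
Your proposal is correct, and on part (iv) and on the range $\omega\ge -1$ it coincides with the paper's own argument: both rewrite the Brans--Dicke system as the paper's identity (\ref{eq4.4}), namely $G^f_{\mu\nu}=8\pi e^f\left[T_{\mu\nu}-\frac12 g_{\mu\nu}\frac{T}{3+2\omega}\right]+8\pi(1+\omega)S_{\mu\nu}$ with $S_{\mu\nu}$ a free-scalar stress tensor, and then feed the resulting energy conditions into Theorems \ref{theorem1.1}--\ref{theorem1.3}; your verification that $\square f\big\vert_S\ge 0$, using $\nabla_l f\big\vert_S=0$ to force $\vert df\vert^2\big\vert_S=h^{AB}D_AfD_Bf\ge 0$, is exactly the paper's computation (\ref{eq4.5}).

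Where you genuinely differ is the treatment of $-\frac32<\omega<-1$ in parts (i)--(iii), and there your route is in fact stronger than the paper's. The paper proves (i) and (iii) entirely in the Jordan frame from the $f$-null energy condition (\ref{eq1.4}), which by (\ref{eq4.4}) reads $R^f(l,l)=8\pi e^f T(l,l)+(1+\omega)(\nabla_l f)^2$; this is nonnegative only when $\omega\ge -1$, a restriction the paper never addresses (it uses the conformal map only in the opposite direction, to deduce (ii) from (i)). Your Einstein-frame computation $\tilde R(l,l)=R^f(l,l)+\frac12(\nabla_l f)^2=8\pi e^f T(l,l)+\left(\omega+\frac32\right)(\nabla_l f)^2\ge 0$, combined with the sign correspondence $\tilde\theta\propto\theta_f$ of (\ref{eq2.11}) and Remark \ref{remark2.6} and the conformal invariance of the event horizon, covers the whole claimed interval, so on (i)--(iii) you have actually closed a gap in the published proof. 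Two caveats. First, to run the standard general-relativistic theorems in the Einstein frame you must check that ${\tilde g}$ remains asymptotically flat and strongly asymptotically predictable; the paper justifies precisely this at the end of section 2 from $f\le k$ and $\nabla_l f\to 0$ at ${\mathcal I}$, both available here, so the step is fillable but should be stated. Second, for (iv) your argument---like the paper's---establishes hypothesis (\ref{eq1.7}) of Theorem \ref{theorem1.3} only for $\omega\ge -1$: on $S$, where $\nabla_l f=0$ and $g(l,w)=-1$, the scalar-field contribution to $G^f(l,w)$ is $\frac12(1+\omega)\vert Df\vert^2$, which has no favourable sign when $\omega<-1$. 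You flag this restriction honestly, and since the paper's own proof carries the identical tacit limitation it is not a gap relative to the paper; but be aware that neither argument proves (iv) on the full stated interval $\left(-\frac32,\infty\right)$.
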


Condition (\ref{eq1.12}) holds for perfect fluids of mass-energy
density $\rho$ and pressure $p$ if $\rho \ge 3p$, for free
scalar fields $\psi$ if $\left ( \frac{\partial \psi}{\partial
t}\right )^2 \le |\vec \nabla \psi |^2$, and for massless Maxwell
fields always. The condition that $\nabla_l\varphi\big\vert_S=0$ is
reasonable at least in the static case.

Parts (i--iii) of this theorem are in fact unsurprising. The conformal transformation from the Jordan to the Einstein frame formulation (see (\ref{eq2.10}), (\ref{eq2.12})) preserves the assumptions,
maps the Jordan frame horizon area element $dS$ to $\varphi dS$, and maps the Jordan frame $f$-modified horizon expansion scalar to the Einstein frame unmodified expansion scalar so as to preserve the sign (see Remark \ref{remark2.6}). In this light, parts (i--iii) are easily understood. They lend confidence to the underlying Bakry-\'Emery theorems on which they are based and which do not necessarily have a physical context. Furthermore,
the phenomenon in statement (ii) of Theorem \ref{theorem1.4} has been observed in an interesting numerical study of the collapse of collisionless dust in Brans-Dicke theory \cite{SST}. The stress-energy tensor of such a model obeys the null energy condition. However, the numerical evolution reveals that the null components of the Jordan frame Ricci tensor are sometimes negative \cite[figure 10]{SST}, and the apparent horizon with respect to the Jordan frame metric sometimes lies outside the event horizon \cite[figure 9]{SST}. Nonetheless, the Einstein frame apparent horizon always lies behind the event horizon \cite[figures 11 and 12]{SST}.

This paper is organized as follows. Section 2 contains the proof of Theorem
\ref{theorem1.1}, while Theorem \ref{theorem1.3} is proved in
section 3. Section 4 is devoted to Brans-Dicke theory and the proof
of Theorem \ref{theorem1.4}. An appendix is devoted to a simple
Bakry-\'Emery version of the topological censorship
theorem \cite{FSW}, based on an easy modification of an extant proof
\cite{GW}. This produces independent constraints on horizon topology as an immediate corollary (cf \cite{CW, GSWW}).

\medskip
\noindent\emph{Acknowledgements.} We thank GJ Galloway for
comments on a draft and for suggestions which improved the proof of
Proposition \ref{proposition2.4}. MR is grateful for an
Undergraduate Summer Research Award from the Natural Sciences and
Engineering Research Council (NSERC) of Canada. EW is grateful to the Park City Math Institute 2013 Research Program, during which some of this work was completed. This work was supported by an NSERC Discovery Grant to EW.

\section{$f$-trapped surfaces and $f$-apparent horizons}
\setcounter{equation}{0}

\noindent We begin with a
closed spacelike surface $S$ of codimension $2$ in a spacetime of
dimension $n$, thus $\dim S = n-2$. There are two congruences of
null geodesics issuing orthogonally from $S$, called the outgoing
and ingoing null congruences. As in the introductory section, we
denote the tangent fields to these congruences by $l^{(i)}$ and
their respective expansion scalars by $\theta^{(i)}$ with $i=1$ for
the outgoing congruence (we also use the notation
$\theta^{(1)}=:\theta$) and $i=2$ for the ingoing congruence (we
also write $\theta^{(2)}=\kappa$). Then $S$ is \emph{outer trapped}
if $\theta^{(i)}<0$ for $i=1$ and \emph{trapped} if
$\theta^{(i)}\vert_S<0$ for both $i=1$ and $i=2$. If
\begin{equation}
\label{eq2.1} \theta^{(i)}_f:=\theta^{(i)}-\nabla_{l^{(i)}} f<0
\end{equation}
then $S$ is \emph{outer $f$-trapped} if $i=1$ and \emph{$f$-trapped}
if $i=1,2$. We say $S$ is \emph{marginally $f$-trapped} if the
strict inequality in (\ref{eq2.1}) is replaced by the closed
inequality $\le 0$.

The key to the issue is the following lemma (see also \cite{Case}):

\begin{lemma}\label{lemma2.1}
Assume that $f$ is a smooth function obeying $f\le k$ for some
constant $k$ and say that
\begin{equation}
\label{eq2.2} {\rm Ric}^f(l,l)\ge 0
\end{equation}
along every null geodesic $\gamma$ with tangent field
$l=\frac{d}{ds}$ and belonging to a null geodesic congruence issuing
from $S$. If
$\theta_f<0$ on $S$, then there is a focal point to $S$ at some
$s>0$ along $\gamma$.
\end{lemma}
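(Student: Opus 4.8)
The plan is to obtain a weighted Riccati inequality from the Raychaudhuri equation for the null congruence issuing from $S$, and then to use the one-sided bound $f\le k$ to turn this into a finite-parameter focusing statement.

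First I would write the Raychaudhuri equation for the congruence, which is twist-free since it issues orthogonally from $S$. With affine parameter $s$ and tangent $l=d/ds$, and since the transverse screen space is $(n-2)$-dimensional,
\[
\frac{d\theta}{ds}=-\frac{1}{n-2}\theta^2-\sigma_{ab}\sigma^{ab}-R_{ij}l^il^j,
\]
where $\sigma$ is the shear. Because $l$ is geodesic, $\frac{d}{ds}\nabla_l f=l^j\nabla_j(l^i\nabla_i f)={\rm Hess}\,f(l,l)$, so subtracting and using $R^f_{ij}=R_{ij}+\nabla_i\nabla_j f$ gives
\[
\frac{d\theta_f}{ds}=-\frac{1}{n-2}\theta^2-\sigma_{ab}\sigma^{ab}-{\rm Ric}^f(l,l).
\]
Dropping the nonpositive shear term and invoking the hypothesis ${\rm Ric}^f(l,l)\ge 0$ yields
\[
\frac{d\theta_f}{ds}\le -\frac{1}{n-2}\theta^2.
\]

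The crucial feature is that the right-hand side carries $\theta^2$, not $\theta_f^2$. In the classical case $f\equiv 0$ the two agree and one integrates the Riccati inequality to produce a focal point within affine parameter $(n-2)/|\theta_f(0)|$; here the mismatch blocks this, and this is exactly where $f\le k$ must be used. I would encode the bound through the cross-sectional area element $\mathcal A(s)$ and its weighted counterpart $e^{-f}\mathcal A(s)$, whose logarithmic derivative is precisely $\theta_f$; since $f\le k$ bounds $e^{f}$ above, the ordinary and weighted area elements vanish together, so a focal point (where $\mathcal A\to 0$) is equivalent to $e^{-f}\mathcal A\to 0$. From the inequality above, $\theta_f$ is nonincreasing and remains $\le\theta_f(0)<0$, while $\int_0^s\nabla_l f\,ds'=f(\gamma(s))-f(\gamma(0))\le k-f(\gamma(0))$. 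Feeding $\theta=\theta_f+\nabla_l f$ into these two bounds controls the first moment $\int_0^s\theta\,ds'$ from above by a quantity that decreases linearly in $s$, which is what prevents $\theta$ from being held near zero and forces focusing.

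The hard part is to upgrade this to focusing at \emph{finite} affine parameter rather than only as $s\to\infty$. A one-sided bound on $f$ controls only the first moment $\int_0^s\nabla_l f\,ds'$ and not $\int_0^s(\nabla_l f)^2\,ds'$, so a direct Cauchy--Schwarz estimate delivers merely asymptotic focusing. To close this gap I would feed the integrated inequality $\theta_f(s)\le\theta_f(0)-\frac{1}{n-2}\int_0^s\theta^2\,ds'$ back into itself: writing $I(s):=\int_0^s\theta^2\,ds'$ and $J(s):=\int_0^s I\,ds'$, the linear bound on $\int_0^s\theta\,ds'$ together with $\bigl(\int_0^s\theta\,ds'\bigr)^2\le s\,I(s)$ yields, for $s$ beyond a fixed $s_0$, a Riccati-type differential inequality of the form $s\,J'\gtrsim J^2$, which blows up at finite parameter and hence forces $\theta$ unbounded, i.e.\ a focal point. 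The only remaining loose end is the degenerate case in which the relevant integral starts out zero; but that forces $\theta\equiv 0$, hence $\nabla_l f\equiv-\theta_f(0)>0$ and $f\to+\infty$, contradicting $f\le k$. I expect this finite-parameter feedback estimate, and not the Raychaudhuri computation, to be the main obstacle.
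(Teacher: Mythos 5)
Your proof is correct, but it follows a genuinely different route from the paper's. Both arguments start from the same $f$-modified Raychaudhuri inequality $d\theta_f/ds\le -\theta^2/m-{\rm Ric}^f(l,l)$, $m=n-2$. The paper then expands $\theta^2=(\theta_f+\nabla_l f)^2$, drops the $(\nabla_l f)^2$ term, and absorbs the cross term with the integrating factor $e^{2f/m}$, obtaining $\frac{d}{ds}\bigl(m\,e^{-2f/m}/\theta_f\bigr)\ge e^{-2f/m}\ge e^{-2k/m}$; a single integration then forces $\theta_f\to-\infty$ (hence $\theta\to-\infty$, $f$ being smooth) no later than the explicit parameter $T_1=m\,e^{2(k-f(0))/m}/|\theta_f(0)|$, which reduces to the classical bound $m/|\theta(0)|$ when $f$ is constant. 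You instead keep $\theta^2$, bound the first moment by $\int_0^s\theta\le \theta_f(0)s+k-f(\gamma(0))$ using monotonicity of $\theta_f$ and integration of $\nabla_l f$, convert this via Cauchy--Schwarz into growth of $I(s)=\int_0^s\theta^2$, and feed back through the twice-integrated inequality to get $sJ'\ge J^2/(n-2)^2$, which admits no positive solution on $[s_1,\infty)$; your treatment of the degenerate starting case is also fine (in fact it cannot persist, since $\int_0^s\theta\to-\infty$ forces $I>0$ eventually). Two remarks. First, your parenthetical claim that $f\le k$ makes the weighted and unweighted area elements ``vanish together'' holds only in the direction you need ($\mathcal{A}\le e^{k}e^{-f}\mathcal{A}$, so vanishing of the weighted element implies a focal point); the converse would require a lower bound on $f$ --- but your final argument never actually uses this remark. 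Second, and more substantively, the paper's integrating-factor computation is not just shorter: it produces the quantitative integrated estimate (2.6), which is recycled as the decay estimate (2.7) in the proof of Proposition 2.4, where one needs $\theta^{(2)}$ to fall below an arbitrary $\epsilon>0$ at large parameter rather than the mere existence of a focal point. Your argument proves Lemma 2.1 with a non-explicit, exponentially weaker focal bound of the form $s_1 e^{(n-2)^2/J(s_1)}$, so it would not substitute for (2.6)--(2.7) downstream. What your approach buys is that it is elementary: it avoids the substitution-plus-integrating-factor trick entirely, relying only on moment bounds, Cauchy--Schwarz, and ODE comparison.
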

\begin{remark}
\label{remark2.2} The result holds for timelike $\gamma$ too,
provided (\ref{eq2.2}) holds for every \emph{timelike} geodesic
congruence issuing from $S$, with tangent field $l$.
\end{remark}
\begin{proof} The Riccati equation governing $\theta$ is known as the
Raychaudhuri equation. Dropping a term of definite sign, it yields
the inequality
\begin{equation}
\label{eq2.3} \frac{d\theta}{ds} \le
- {\rm Ric}(l,l) -\frac{\theta^2}{m}\ ,
\end{equation}
where $m=n-2$ for a null congruence ($m=n-1$ for a timelike
congruence). Adding $-\nabla_l\nabla_l f$ to each side and writing
${\rm Ric}^f:=\left ( {\rm Ric}+\nabla \nabla f\right )(l,l)= {\rm
Ric}(l,l)+ \nabla_l \nabla_l f$ (using that $\nabla_ll=0$), then
\begin{equation}
\begin{split}
\label{eq2.4} \frac{d\theta_f}{ds} \le&\,\,
- {\rm Ric}^f(l,l) -\frac{\theta^2}{m}\\
=&\,\, - {\rm Ric}^f(l,l)
-\frac{1}{m}\left [ \theta_f^2+2\theta_f \nabla_l f
+\left ( \nabla_l f\right )^2\right ]\\
\le&\,\, -\frac{\theta_f^2}{m}-\frac{2\theta_f \nabla_l f}{m}\ ,
\end{split}
\end{equation}
where on the left-hand side we used the definition of $\theta_f$ as
in (\ref{eq2.1}) and in the last line we used (\ref{eq2.2}). For
economy of notation, we suppress the geodesic $\gamma$ when it
occurs in the composition $f\circ \gamma$ and simply write $f$. The
inequality in passing from the middle line on the right-hand side to
the last line is an equality iff, along the null geodesic $\gamma$
under consideration, ${\rm Ric}^f(l,l)=0$ and $\nabla_l f=0$ (the
first line is an equality iff the shear of the congruence containing
$\gamma$ vanishes).

Then we can rewrite (\ref{eq2.4}) as
\begin{equation}
\label{eq2.5} \frac{d}{ds}
\left ( \frac{m}{e^{\frac{2f}{m}}\theta_f}\right )
\ge e^{-\frac{2f}{m}}\ .
\end{equation}
Now assume that $f$ is bounded above by $k$. Then, integrating
(\ref{eq2.5}) over $s\in [0,t]$, for $t$ small enough so that the left-hand side of (\ref{eq2.5}) remains well-defined, we obtain
\begin{equation}
\label{eq2.6}
\frac{m}{e^{\frac{2f(t)}{m}}\theta_f(t)} \ge \int\limits_0^t
e^{-\frac{2f(s)}{m}}ds+\frac{m}{e^{\frac{2f(0)}{m}}\theta_f(0)}
\ge te^{-\frac{2k}{m}}+\frac{m}{e^{\frac{2f(0)}{m}}\theta_f(0)}\ .
\end{equation}
This is an equality at $t=0$, and then there is an interval $t\in
[0,T)$ such that each side is negative since $\theta_f(0)<0$ for an
$f$-trapped surface. There will then be some $t=T_1>0$ such that the
right-hand side of (\ref{eq2.6}) approaches $0$ from below as
$t\nearrow T_1$. Thus the left-hand side must tend to $0$ from below
as $t\nearrow T_2$ for some $0<T_2\le T_1$. Since $f\le k$, then
necessarily $\theta_f(t)\to -\infty$ as $t\nearrow T_2$. But because
$\theta_f(t):=\theta-\nabla_l f $ and $f$ is smooth, then $\theta\to
-\infty$ as $t\nearrow T_2$. \end{proof}

\begin{corollary}\label{corollary2.3}
In a strongly asymptotically predictable and asymptotically flat
spacetime, no $f$-trapped surface meets ${\mathcal D}$.
\end{corollary}

The proof is standard:

\begin{proof}
Say $S$ is $f$-trapped. By way of contradiction, if $S\cap {\mathcal
D}$ were nonempty, there would exist causal curves from $S$ to
${\mathcal I}$. Since these causal curves cannot approach spatial
infinity $i^0$, the boundary $\partial I^+(S)$ therefore would meet
${\mathcal I}$, say at a point $q\in {\mathcal I}$. Since spacetime
is strongly asyptotically predictable, then $\partial I^+(S)$ would
contain a past-null geodesic generator beginning at $q\in {\mathcal
I}$ and which does not end until it reaches $S$. Reversing
direction, this gives a future-inextendible null geodesic
$\gamma:[0,\infty)\to M$ lying always on $\partial I^+(S)$. It
therefore cannot contain a focal point to $S$, contradicting Lemma
\ref{lemma2.1}.
\end{proof}

The case of a marginally trapped surface is somewhat
more subtle, and it is here that the asymptotic flatness of
$f$-strongly asymptotically predictable domains is needed.
\begin{proposition}\label{proposition2.4}
Let ${\mathcal D}$ be an asymptotically flat, strongly
asymptotically predictable domain of outer communications and assume
the conditions of Lemma \ref{lemma2.1} hold on ${\mathcal D}$.
Assume further that $\nabla_l f\to 0$ on approach to ${\mathcal
I}^+$ for any null $l$. Then no marginally $f$-trapped closed
spacelike surface $S$ intersects ${\mathcal D}$.
\end{proposition}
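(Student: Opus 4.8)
The plan is to reproduce the argument of Corollary \ref{corollary2.3} to extract an escaping null generator, and then to confront the single new difficulty: because $S$ is only \emph{marginally} $f$-trapped, Lemma \ref{lemma2.1} cannot be applied directly at points where $\theta_f=0$. First I would argue by contradiction. Suppose $S$ is marginally $f$-trapped, so $\theta_f\le 0$ on $S$, and suppose $S\cap{\mathcal D}\ne\emptyset$. Exactly as in Corollary \ref{corollary2.3}, the existence of a causal curve from $S$ to ${\mathcal I}$ together with strong asymptotic predictability produces a future-inextendible null generator $\gamma:[0,\infty)\to M$ of $\partial I^+(S)$ which runs from a point $p\in S$ out to ${\mathcal I}^+$ and which contains \emph{no} focal point to $S$. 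Since $\gamma$ issues orthogonally from $S$ along the outgoing null normal $l$, its initial expansion obeys $\theta_f(0)\le 0$.

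Next I would run the Riccati analysis of Lemma \ref{lemma2.1} along $\gamma$, but allowing the integration to start at an arbitrary parameter value. The point is that inequality (\ref{eq2.5}) shows $u:=m/(e^{2f/m}\theta_f)$ is strictly increasing, with $u'\ge e^{-2k/m}>0$, at every parameter where $\theta_f\ne 0$. Hence if $\theta_f(s_0)<0$ for \emph{any} $s_0\ge 0$, then integrating from $s_0$ as in (\ref{eq2.6}) forces $\theta_f\to-\infty$, and therefore $\theta\to-\infty$, at some finite parameter beyond $s_0$; this is a focal point to $S$ lying on $\gamma$, contradicting the choice of $\gamma$. It remains to rule out the possibility $\theta_f\ge 0$ everywhere along $\gamma$. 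In that case $\theta_f(0)=0$, and rewriting (\ref{eq2.4}) with the integrating factor $\mu(s):=\exp\!\bigl(\tfrac{2}{m}\int_0^s\nabla_lf\,ds'\bigr)$ gives $\tfrac{d}{ds}(\mu\,\theta_f)\le -\tfrac{\mu}{m}\theta_f^2\le 0$, so that $\mu\,\theta_f\le \mu(0)\theta_f(0)=0$ and hence $\theta_f\le 0$ all along $\gamma$. Combined with $\theta_f\ge0$ this leaves only the rigid alternative $\theta_f\equiv 0$ on $[0,\infty)$.

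The hard part is to eliminate this rigid marginal case, and this is precisely where asymptotic flatness and the hypothesis $\nabla_l f\to 0$ are needed (consistent with the remark preceding the proposition). When $\theta_f\equiv 0$, equality must hold at every step of (\ref{eq2.4}); by the equality discussion following that display this forces $\nabla_l f\equiv 0$ (and vanishing shear, and ${\rm Ric}^f(l,l)\equiv 0$) along $\gamma$, whence $\theta=\theta_f+\nabla_l f\equiv 0$. Consequently the cross-sectional area element of the outgoing congruence, which evolves by $\tfrac{d}{ds}dS=\theta\,dS$, is constant along $\gamma$ all the way out to ${\mathcal I}^+$. But $\gamma$ is an outgoing null generator reaching null infinity in an asymptotically flat spacetime, and along such a generator the expansion is strictly positive for all sufficiently large parameter (the cross-sections fan out toward infinite area); this contradicts $\theta\equiv 0$. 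Hence no marginally $f$-trapped closed spacelike $S$ can meet ${\mathcal D}$. I expect the delicate point to be exactly this last step: translating the conformal-completion notion of asymptotic flatness, together with $\nabla_l f\to 0$, into the geometric statement that the escaping generator has $\theta>0$ near ${\mathcal I}^+$, which is what makes the constant-area conclusion untenable.
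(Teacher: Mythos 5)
Your reduction to the rigid case is correct, and it is genuinely different from the paper's route. Starting from $\theta_f(0)\le 0$, your integrating-factor form of (\ref{eq2.4}) does show that $\theta_f\le 0$ persists along the escaping generator $\gamma$; the estimate of Lemma \ref{lemma2.1} rules out $\theta_f(s_0)<0$ at any parameter (it would produce a focal point at finite parameter, impossible for a generator lying on $\partial I^+(S)$); and in the remaining case $\theta_f\equiv 0$ the equality analysis of (\ref{eq2.4}) correctly forces ${\rm Ric}^f(l,l)\equiv 0$, vanishing shear, and $\nabla_l f\equiv 0$, hence $\theta\equiv 0$ along $\gamma$. The paper does not perform this reduction: it keeps only the bound $\theta^{(2)}_f\le 0$ from (\ref{eq2.7}), converts it to $\theta^{(2)}\le\epsilon$ using the hypothesis $\nabla_l f\to 0$, and then must invoke the $\epsilon>0$ version of the maximum principle, \cite[Theorem 2.2]{Galloway2}. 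Your rigid-case dichotomy would permit the sharper comparison with $\epsilon=0$, which is a real (if modest) simplification.

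The gap is your final step. The claim that ``along an outgoing generator reaching null infinity the expansion of the $S$-orthogonal congruence is strictly positive for all sufficiently large parameter'' is not a citable fact, and as a statement divorced from the compactness of $S$ it is false: a null hyperplane in Minkowski space has $\theta\equiv 0$ and zero shear, yet all of its generators reach ${\mathcal I}^+$ --- they focus onto a single point of ${\mathcal I}^+$, which is exactly how the ``fanning out to infinite area'' fails. So the conclusion must use both the compactness of $S$ and the regularity of the way $\partial I^+(S)$ meets ${\mathcal I}^+$, and establishing that is precisely the nontrivial content of the proposition. The paper (following \cite{CGS}) supplies it by constructing a spacelike surface $S^+\subset{\mathcal I}$ through a suitably chosen point $q$, showing that the past cone $\partial I^-(S^+)$ has expansion $\theta^{(1)}\ge C>0$ along $\gamma$, and then playing the disjointness of $I^+(S)$ and $I^-(S^+)$ against their shared generator $\gamma$ via a support/maximum-principle comparison, which forces $\theta^{(2)}\ge\theta^{(1)}$ and yields the contradiction. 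Your heuristic, once made rigorous, \emph{is} this argument; without it the contradiction does not close. Note also that the choice of $q$ matters: $\gamma$ may have a focal point ``at'' $q$ itself, the technical lacuna in the older proofs that \cite{CGS} repairs and that your appeal to the argument of Corollary \ref{corollary2.3} glosses over.
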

\begin{proof}[Sketch of proof.]
The standard proof is given, e.g., in \cite[Proposition 12.2.3, p
310]{Wald} (see also \cite[section 9.2, p 320]{HE}), but has a small
technical gap, so we follow instead the proof given in
\cite[Theorem 6.1]{CGS}. We restrict ourselves to outlining the
logic of that proof and discussing the modification required to
accommodate non-zero $f$.

All versions of the proof proceed by way of contradiction by
assuming that the future of $S$ meets ${\mathcal I}$. As this future
does not meet a neighbourhood of spatial infinity $i^0$, there is
then a point $q\in {\mathcal I}^+\cap \partial I^+(S)$. Using the
asymptotic predictability, $q$ can be reached from $S$ by a null
geodesic $\gamma$, which has no focal point to $S$ except possibly
at $q$ (this latter possibility is not accounted for older versions
of the proof).

In \cite{CGS}, a smooth spacelike surface $S^+\subset {\mathcal I}$
through $q$ is constructed. No point of this surface lies in the
chronological future of $S$ (at least, for suitably chosen $q$ and
$S^+$: in the presence of focal points on ${\mathcal I}^+$, the
choice of $q$ and $S^+$ in explained in \cite{CGS}). The boundary of
the past $\partial I^-(S^+)$ of this surface will contain $\gamma$.
The expansion scalar, call it $\theta^{(1)}$, of $\partial I^-(S^+)$
along $\gamma$ is computed using the unnumbered equation in the
proof of Theorem 6.1 in \cite{CGS}, where it is shown to be positive
and bounded away from $0$. That is, there is an $s_0\in {\mathbb R}$
such that we can write $\theta^{(1)}\big\vert_{\gamma(s)} \ge C>0$
for any $s\ge s_0$.

On the other hand, by equation (\ref{eq2.6}), the $f$-modified
expansion, say $\theta^{(2)}_f(t)$ of the outbound null congruence
from $S$ obeys
\begin{equation}
\label{eq2.7} \theta^{(2)}_f(t)
\le \frac{me^{-\frac{2f(t)}{m}}}{\frac{m}{\theta^{(2)}_f(0)}
e^{-\frac{2f(0)}{m}}+te^{-\frac{2k}{m}}}\le 0\ ,
\end{equation}
and so since $\nabla_l f\to 0$ near ${\mathcal I}$ (i.e., as
$t\to\infty$ along the congruence), then the ordinary expansion
scalar obeys $\theta^{(2)}(t)\le \epsilon$, where we can make
$\epsilon>0$ be arbitrarily small by choosing $t$ sufficiently
large. In particular, eventually
\begin{equation}
\label{eq2.8} \theta^{(2)}\le \epsilon<C\le\theta^{(1)}\ .
\end{equation}

The problem is that $I^+(S)$ cannot intersect $I^-(S^+)$ though the
boundaries of these sets share a common generating curve $\gamma$,
so the boundaries of $I^+(S)$ and $I^-(S^+)$ must ``bend away from
each other'' along $\gamma$. For this to happen, we must have
$\theta^{(2)}\ge \theta^{(1)}$, contradicting (\ref{eq2.8}). The
resolution of this contradiction is that $\partial I^+(S)$ cannot
meet ${\mathcal I}^+$, establishing the proposition.\end{proof}

In \cite{CGS}, the argument in the last paragraph above instead
proceeds by appeal to the maximum principle. The same coud have been
done here. In the \cite{CGS} (i.e., no $f$) case, one obtains
\ref{eq2.8} with $\epsilon=0$, implying both that $\theta^{(2)}\le
0<\theta^{(1)}$. The geometric maximum principle for smooth
hypersurfaces \cite[Theorem 2.1]{Galloway2}, then implies that
$\partial I^+(S)$ and $\partial I^-(S^+)$ would necessarily coincide
near $\gamma$ and would have expansion scalar
$\theta=\theta^{(1)}=\theta^{(2)}=0$, which contradicts another
implication of (\ref{eq2.8}) with $\epsilon=0$, which is that
$\theta^{(1)}\ge C>0$. In the present case (i.e., with $f$), because
$\epsilon>0$, we cannot appeal directly to \cite[Theorem
2.1]{Galloway2}. Nonetheless, from (\ref{eq2.8}) we have
$\theta^{(2)}\le \epsilon\le\theta^{(1)}$ and so, by appealing to
\cite[Theorem 2.2]{Galloway2}, we have
$\theta^{(1)}=\theta^{(2)}=\epsilon$ locally near $\gamma$. But
(\ref{eq2.8}) also implies that $\theta^{(1)}\ge C>\epsilon$ at
points along $\gamma$, so again we have a contradiction which
establishes the proposition.

Mimicking the standard analysis, we define a \emph{total $f$-trapped
region} to be the union of all $f$-trapped surfaces and define the
\emph{$f$-apparent horizon} to be its boundary. When this boundary
is smooth, it has outbound expansion $\theta_f^{(1)}=0$; the proof
is standard (e.g., \cite[Theorem 12.2.5, \emph{mutatis
mutandis}]{Wald}). Therefore the proof of Proposition
\ref{proposition2.4} applies in this case, so an $f$-apparent
horizon must lie entirely outside ${\mathcal D}$ (i.e., it coincides
with, or lies behind, a black hole event horizon). This completes
the proof of Theorem \ref{theorem1.1}.

We remark that, if the condition (\ref{eq2.2}) holds for all
timelike $l$ as well, if $f$ is bounded above, if an $f$-modified
version of the null generic condition holds, and if there are no
closed timelike curves, then Case's singularity theorem \cite{Case}
guarantees that the future of the trapped region is nonspacelike
geodesically incomplete.

We now briefly turn attention back to event horizons and the proof
of Theorem \ref{theorem1.2}. A well-known result in the standard
theory is that, when $R_{ij}l^il^j\ge 0$ for all null $l$, then the
null geodesic generators of an event horizon have $\theta\ge 0$
\cite[Lemma 9.2.2]{HE}. Here we have an analogous result:

\begin{lemma}\label{lemma2.5}
Assume that $f\le k$ and the $f$-null energy condition (\ref{eq2.2})
holds as in Lemma \ref{lemma2.1}. Then the $f$-modified expansion of
the null geodesic generators of the event horizon ${\mathcal H}$ of
a future asymptotically predictable spacetime obey $\theta_f\ge 0$.
\end{lemma}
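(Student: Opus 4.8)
The plan is to argue by contradiction, following the structure of the standard proof \cite[Lemma 9.2.2]{HE} but substituting the $f$-modified Raychaudhuri estimate established in Lemma \ref{lemma2.1} for the usual one. The feature I would exploit is that, in a future asymptotically predictable spacetime, the event horizon ${\mathcal H}=\partial I^-({\mathcal I}^+)$ is an achronal boundary whose null geodesic generators, being generators of the boundary of a past set, possess no future endpoints: once a generator enters ${\mathcal H}$ it remains on ${\mathcal H}$ for all future affine parameter. I note that the estimate (\ref{eq2.6}) was derived purely along a single null geodesic $\gamma$ from the expansion of its generating congruence, so it applies verbatim when that congruence is taken to be the congruence of horizon generators, with the role of the initial surface played by any point along $\gamma$.

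First I would suppose, to the contrary, that there is a point $p$ on a generator $\gamma$ of ${\mathcal H}$ at which the $f$-modified expansion satisfies $\theta_f(p)<0$. Parametrizing $\gamma$ by affine parameter $s$ with $\gamma(0)=p$, I would apply the estimate (\ref{eq2.6}) with initial datum $\theta_f(0)=\theta_f(p)<0$. Since the hypotheses $f\le k$ and ${\rm Ric}^f(l,l)\ge 0$ hold along $\gamma$, the same integration as in Lemma \ref{lemma2.1} shows that the right-hand side of (\ref{eq2.6}) rises to $0$ from below at some finite parameter $T_1>0$, forcing $\theta_f(s)\to -\infty$ as $s\nearrow T_2$ for some $0<T_2\le T_1$. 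Because $f$ is smooth, $\nabla_l f$ is finite along the segment $\gamma([0,T_2])$, so $\theta=\theta_f+\nabla_l f\to -\infty$ as well, exactly as in the closing lines of the proof of Lemma \ref{lemma2.1}; that is, $\gamma$ acquires a focal point (caustic) at $\gamma(T_2)$.

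The final step is the geometric contradiction. The blow-up $\theta\to -\infty$ signals that neighbouring horizon generators cross at $\gamma(T_2)$, so $\gamma$ cannot continue as a generator of the achronal boundary ${\mathcal H}$ beyond $\gamma(T_2)$: past a caustic the curve would enter the interior $I^-({\mathcal I}^+)$ and leave ${\mathcal H}$. Thus $\gamma(T_2)$ would be a future endpoint of $\gamma$, contradicting the fact that generators of ${\mathcal H}$ have no future endpoints. Hence no such $p$ exists and $\theta_f\ge 0$ everywhere on ${\mathcal H}$.

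I expect the main subtlety to lie in this last step rather than in the analytic estimate, which is handled entirely by Lemma \ref{lemma2.1}. One must invoke the standard structure theory of event horizons as achronal boundaries to guarantee both that the generators are future-inextendible and that the formation of a focal point ejects the generator from ${\mathcal H}$. The $f$-dependence is absorbed completely into the Riccati inequality (\ref{eq2.6}), and the passage from $\theta_f\to -\infty$ to $\theta\to -\infty$ relies only on the smoothness of $f$, so beyond the modified estimate the argument is identical to the general relativistic one.
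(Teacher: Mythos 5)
Your analytic step is fine---the estimate (\ref{eq2.6}) is derived along a single null geodesic and applies equally well to the congruence of horizon generators---but the geometric step rests on a conflation of future-\emph{inextendibility} with future \emph{completeness}, and that is a genuine gap. The structure theory of achronal boundaries tells you that generators of ${\mathcal H}=\partial I^-({\mathcal I}^+)$ have no future endpoints, i.e.\ they remain on ${\mathcal H}$ \emph{for as long as they exist in $M$}; it does not tell you that they are defined for all values of the affine parameter. Future asymptotic predictability does not rule out future-incomplete horizon generators. Your contradiction requires the generator $\gamma$ to survive at least until the parameter value $T_2$ at which (\ref{eq2.6}) forces $\theta_f\to-\infty$; if $\gamma$ is incomplete and leaves the spacetime before $T_2$, no focal point ever forms, nothing is ejected from ${\mathcal H}$, and there is no contradiction. (A secondary wobble: past a caustic the generator would not ``enter the interior $I^-({\mathcal I}^+)$''; the correct statement is that points of $\gamma$ beyond the caustic lie in $I^+$ of nearby horizon points, which contradicts achronality of ${\mathcal H}$. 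That part is repairable; the completeness issue is not, without a new hypothesis.)

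This is precisely why the paper's proof, following \cite[Lemma 9.2.2]{HE} and the first paragraph of the proof of Theorem 12.2.6 in \cite{Wald}, does not argue on the horizon itself. There, if $\theta_f(p)<0$ at some $p\in{\mathcal H}$, one deforms ${\mathcal H}$ slightly \emph{outward} near $p$ so as to enter ${\mathcal D}$ while keeping $\theta_f<0$ at some point $p'$ of the deformed surface. Then $\partial I^+(p')$ meets ${\mathcal I}^+$ at some $q$, and asymptotic predictability yields a null geodesic generator of the achronal boundary $\partial I^+(p')$ running from $p'$ all the way out to $q$. Such a geodesic demonstrably exists out to ${\mathcal I}^+$ (hence has infinite affine length under the asymptotic conditions) and, lying in an achronal boundary throughout, it admits no focal point---directly contradicting Lemma \ref{lemma2.1}. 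The outward deformation is not a stylistic choice: it is what manufactures a ``long enough, focal-point-free'' geodesic, which your on-horizon argument cannot guarantee. To salvage your version you would have to assume (or prove) future completeness of the horizon generators, an assumption the lemma does not make.
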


\begin{proof}[Sketch of proof.]
The proof is the same as that of \cite[Lemma 9.2.2]{HE} or
\cite[proof of Theorem 12.2.6, first paragraph]{Wald}, but since we
have condition (\ref{eq2.2}) we must use $\theta_f$. If
$\theta_f(p)<0$ at some $p\in {\mathcal H}$, then one can deform
${\mathcal H}$ slighty outward so as to intersect ${\mathcal D}$,
keeping $\theta_f<0$ somewhere, say at $p'$. But then $\partial I^+
(p')$ will intersect ${\mathcal I}^+$, say at $q$, and, using
asymptotic predictability, one can trace back from $q$ to construct
a null geodesic generator of $\partial I^+ (p')$ which extends from
$p'$ to $q$. This generator cannot have a focal point, contradicting
Lemma \ref{lemma2.1}.
\end{proof}

Then we are in a position to prove Theorem \ref{theorem1.2}.

\begin{proof}[Proof of Theorem \ref{theorem1.2}]
Again, following the standard proof (\cite[Proposition 9.2.7]{HE},
\cite[proof of Theorem 12.2.6, second paragraph]{Wald}), we observe
that there is an injective (but possibly not surjective) map
$\psi:S_1\to S_2$ by transport along the null geodesic generators of
${\mathcal H}$. Lemma \ref{lemma2.5} we have that $\theta_f\ge 0$
along these generators. But, differentiating equation (\ref{eq1.5})
along any of these null geodesic generators, say with tangent vector
$l=\frac{d}{ds}$, we have
\begin{equation}
\label{eq2.9} \frac{dA_f}{ds}=\int_S \left ( \theta-\nabla_l f
\right ) e^{-f}dS\equiv \int_S \theta_f e^{-f}dS\ .
\end{equation}
This shows that $A_f[S_1]\le A_f[\psi(S_1)]\le A_f[S_2]$.
\end{proof}

Proposition \ref{proposition2.4} can also be obtained from standard results in \cite{CGS}, \cite{HE}, and \cite{Wald} which rely on the usual Raychaudhuri equation (\ref{eq2.3}), without invoking the $f$-modified equation (\ref{eq2.4}) and the associated estimate that appears in the proof of Lemma \ref{lemma2.1}. We no outline that argument. Let
\begin{equation}
\label{eq2.10} {\tilde g}_{ij}:=e^{-\frac{2f}{n-2}}g_{ij}\ .
\end{equation}
Then the expansion scalar of the $l$-congruence transforms as
\begin{equation}
\label{eq2.11} {\tilde \theta}=e^{\frac{2f}{(n-2)}}\left ( \theta - \nabla_l f \right ) =e^{\frac{2f}{(n-2)}}\theta_f\ .
\end{equation}

\begin{remark}\label{remark2.6}
$\theta_f$ and ${\tilde \theta}$ are either both positive, both negative, or both zero.
\end{remark}

The prefactor $e^{\frac{2f}{(n-2)}}$ ensures that ${\tilde \theta}$ is the expansion scalar of a congruence of null geodesics, not just pregeodesics, with respect to ${\tilde g}$. The conditions $f\le k$ and $\nabla_l f\to 0$ in the above theorems now ensure that the conformal rescaling (\ref{eq2.10}) preserves asymptotic flatness and future asymptotic predictably (e.g., unless $f\le k$, $g$-complete geodesics might not be ${\tilde g}$-complete).

Now ${\tilde \theta}$ is governed by the ordinary Raychaudhuri equation (without $f$-terms) for the rescaled metric ${\tilde g}$. A standard conformal transformation formula shows that the Ricci curvature ${\tilde R}_{ij}$ of ${\tilde g}$ is given in terms of the Ricci curvature $R_{ij}$ of $g$ by
\begin{equation}
\label{eq2.12}
\begin{split}
{\tilde R}_{ij}=&\,\, R_{ij}+\nabla_i\nabla_j f+\frac{1}{(n-2)}\left [ \nabla_i\varphi \nabla_j\varphi +g_{ij}\left (\Delta f -|\nabla f|^2 \right ) \right ]\\
=&\,\, R^f_{ij}+\frac{1}{(n-2)}\left [ \nabla_i\varphi \nabla_j\varphi +g_{ij}\left (\Delta f -|\nabla f|^2 \right ) \right ] \ .
\end{split}
\end{equation}
From this it's easy to see that, for $l$ a null vector, then $R^f_{ij}l^i l^j\ge 0$ implies ${\tilde R}_{ij}l^i l^j\ge 0$. Thus, our assumptions on $g$ imply that the null energy condition holds for ${\tilde R}_{ij}$ at every point, and then the usual analysis shows that apparent horizons in ${\tilde g}$ lie behind event horizons. But by (\ref{eq2.10}) $f$-trapped or marginally $f$-trapped regions with respect to $g$ are trapped or, respectively, marginally trapped with respect to ${\tilde g}$, and since event horizons are conformally invariant, it follows that $f$-apparent horizons with respect to $g$ lie behind event horizons.

\section{Stability and the Hawking topology theorem}
\setcounter{equation}{0}

\noindent Consider now an arbitrary variation $\Phi:S\times I \to M$
of a closed spacelike co-dimension 2 surface $S$ embedded in
spacetime $M$, where $I\subseteq {\mathbb R}$ is an open interval
containing $0$. We let $x_p(\sigma):=\Phi (p,\sigma)$ be the image
of $(p,\sigma)$, $p\in S$. Varying $\sigma$, then this yields a
curve such that $x(0)=p\in S$, and we define
$q:=\frac{\partial}{\partial \sigma}$ to be the tangent field to
this curve. We can specify the variation by specifying $q$. We write
\begin{equation}
\label{eq3.1}
\begin{split}
q:=&\,\, q^{\|}+q^{\perp}\ ,\\
q^{\perp}:=&\,\, bl-\frac12 uk\ ,
\end{split}
\end{equation}
where $q^{\|}$ is tangent to the leaves $S_{\sigma}$ of the
variation, $l$ and $k$ are linearly independent future null vectors
in the normal bundle $NS_{\sigma}$ to the leaves (defined first on
$S$, then parallel transported to a neighbourhood of $S$---see
\cite{AMS} for further details) and are normalized so that $k\cdot
l=-2$, and $b,u:S\to {\mathbb R}$ are arbitrary functions on $S$. We
define the \emph{vector second fundamental form}
$K:TS_{\sigma}\times TS_{\sigma}\to NS_{\sigma}$ by
\begin{equation}
\label{eq3.2} K(X,Y)=-\left ( \nabla_X Y\right )^{\perp}\ ,
\ X,Y\in TS_{\sigma}\ ,
\end{equation}
where the superscript $\perp$ denotes projection into $NS_{\sigma}$.
The \emph{mean curvature vector} of $S_{\sigma}$ is then $H={\rm
tr}_h K\equiv h^{AB}K_{AB}$, where $h_{\mu\nu}:=g_{\mu\nu}+\frac12
\left ( k_{\mu}l_{\nu}+l_{\mu}k_{\nu} \right )$ is the \emph{first
fundamental form} or \emph{induced metric} on $S$. The \emph{null
expansion scalars} are $\theta^{(1)}\equiv\theta:=H\cdot l$ and
$\theta^{(2)}\equiv \kappa:=H\cdot k$, and we have
$H= -\frac12 \left ( \theta k + \kappa l \right )$.

Then the variation of the expansion scalar $\theta$ along integral
curves of $q$ is given (in the notation of \cite{AMS}) by
\begin{equation}
\label{eq3.3}
\begin{split}
\delta_q \theta=&\,\, a\theta+q^{\|}(\theta)-b\left (
K_{AB}^{\mu}K^{\nu AB} +G^{\mu\nu} \right )l_{\mu}l_{\nu}\\
&\,\, -\left [ \Delta_{S_{\sigma}} -2s^AD_A -\left (
\frac12 R_{S_{\sigma}}-\frac12 H^2 -G_{\mu\nu} l^{\mu}w^{\nu}
+VG_{\mu\nu}l^{\mu}l^{\nu} -s^As_A+D_As^A\right ) \right ] u\ .
\end{split}
\end{equation}
Here $G_{\mu\nu}$ is the spacetime Einstein tensor, $R_{S_{\sigma}}$
is the scalar curvature of $S_{\sigma}$,
$\nabla_{S_{\sigma}}:=D^AD_A$ is the Laplacian on $S_{\sigma}$,
$D_A$ is the connection on $S_{\sigma}$ capital Latin indices run
over a basis for $TS_{\sigma}$, $\{ e_A \}$ is such a basis (say
orthonormal), $s_A:=-\frac12 k_{\mu} \nabla_{e_A} l^{\nu}$, and
$a:=-\frac12 k_{\mu} \frac{\partial}{\partial \sigma}
\big\vert_{\sigma=0} l^{\mu}$. As well, we have changed basis for
$NS_{\sigma}$ from $\{l,k\}$ to $\{l,w\}$, where
\begin{equation}
\label{eq3.4} w:=V l + \frac12 k \ .
\end{equation}
with $V\ge 0$ an arbitrary function on $S$, so that $w$ is future-causal. We will also make use of the combination
\begin{equation}
\label{eq3.5} v:=V l -\frac12 k\ .
\end{equation}
If $V\neq 0$ then $v$ is spacelike. Finally, since $H= -\frac12
\left ( \theta k + \kappa l \right )$, then $H^2 = -\theta\kappa$.

We can choose the variation such that $a=0$, $u=0$, $q^{\|}=0$, and
$b=1$. This yields the Raychaudhuri equation
\begin{equation}
\label{eq3.6}
\begin{split}
\nabla_l\theta =&\,\, - \left ( G_{\mu\nu}
+K_{\mu AB}K_{\nu}{}^{AB}\right )l^{\mu}l^{\nu}\\
=&\,\, - \left ( R_{\mu\nu} +K_{\mu AB}K_{\nu}{}^{AB}\right )
l^{\mu}l^{\nu}\\ \ge &\,\, - R_{\mu\nu}l^{\mu}l^{\nu}
- \frac{1}{m}\theta^2\ ,
\end{split}
\end{equation}
upon dropping a term of definite sign (cf equation (\ref{eq2.3})). A different choice is $a=0$, $q^{\|}=0$, $b=uV$, with $u:S\to {\mathbb R}$ to be chosen later, so that now $q=uv$. This yields
\begin{equation}
\label{eq3.7}
\begin{split}
\delta_{uv}\theta =&\,\, -\Delta_{S_{\sigma}}u +2s^AD_Au\\
&\,\, +\left ( \frac12 R_{S_{\sigma}} +\frac12 \theta\kappa
-G_{\mu\nu} l^{\mu}w^{\nu}
-VK_{\mu AB}K_{\nu}{}^{AB}l^{\mu}l^{\nu}-s^As_A+D_As^A\right ) u\ .
\end{split}
\end{equation}
This should be compared to equation (5) in \cite{AMS}, but note we
have been using $w$ in place of the vector $u$ in \cite{AMS} to
avoid confusion with the function $u$ Also, here we keep the
$-\frac12 H^2u\equiv \frac12\theta\kappa u$ term which is obviously
zero at a marginally trapped surface (so this term is dropped in
\cite{AMS}). We do this because, in the present case, we want
instead to consider marginally outer $f$-trapped surfaces, where
$\theta_f=0$. As with equation (5) in \cite{AMS}, from here onward
we will evaluate (\ref{eq3.7}) only on the surface $S$ (so
$\sigma=0$), which is an \emph{outer $f$-apparent horizon}, meaning
that it is the smooth boundary of a region whose points lie on outer
$f$-trapped surfaces, and thus $S$ is outer marginally $f$-trapped.
Then we have
\begin{equation}
\label{eq3.8}
\begin{split}
\delta_{uv}\big\vert_S \theta=&\,\,  -\Delta_S u +2s^AD_Au\\
&\,\, +\left ( \frac12 R_S +\frac12 \theta\kappa
-G_{\mu\nu} l^{\mu}w^{\nu} -VK_{\mu AB}K_{\nu}{}^{AB}l^{\mu}l^{\nu}
-s^As_A+D_As^A\right )\bigg \vert_S u\ .
\end{split}
\end{equation}

Now the idea is to replace various terms with $f$-modified terms.
Here and in what follows, we remind the reader that subscripts $l$
and $k$ refer to the null basis vectors and are not indices ranging
over a set of values. We begin with
\begin{equation}
\label{eq3.9}
\delta_{uv}\theta \equiv u\nabla_v\theta
= u\nabla_v \theta_f +u\nabla_v\nabla_l f\ .
\end{equation}

Moving to the right-hand side of (\ref{eq3.8}), we define
\begin{equation}
\label{eq3.10}
\begin{split}
G^f_{\mu\nu}:=&\,\, R^f_{\mu\nu}-\frac12 g_{\mu\nu}R^f
:= R_{\mu\nu}+\nabla_{\mu}\nabla_{\nu} f -\frac12 g_{\mu\nu}
\left ( R +\square f \right )\\
=&\,\, G_{\mu\nu}+\nabla_{\mu}\nabla_{\nu} f
-\frac12 g_{\mu\nu}\square f\ ,
\end{split}
\end{equation}
where we write $\square f$ for the d'Alembertian of $f$ (i.e., for
the scalar Laplacian $\square f=\nabla^{\mu}\nabla_{\mu}f$ in
Lorentzian signature).

Then
\begin{equation}
\label{eq3.11}
\begin{split}
G_{\mu\nu}l^{\mu}w^{\nu} =&\,\, G_{\mu\nu}^f l^{\mu} w^{\nu}
-\left ( \nabla_{\mu}\nabla_{\nu}f\right )l^{\mu}w^{\nu}
-\frac12 \square f\\
=&\,\, G_{\mu\nu}^f l^{\mu} w^{\nu}-\nabla_w \nabla_l f
+\nabla_w l \cdot \nabla f- \frac12 \square f\\
=&\,\, G_{\mu\nu}^f l^{\mu} w^{\nu}-\nabla_w \nabla_l f
+\frac12 \nabla_kl\cdot \nabla f -\frac12 \square f\ .
\end{split}
\end{equation}
Here we used that $\nabla_l l=0$.

If we insert (\ref{eq3.11}) and (\ref{eq3.9}) into (\ref{eq3.8}),
using (\ref{eq3.5}) and (\ref{eq3.4}) to combine the $\nabla_v
\nabla_l f$ term from (\ref{eq3.9}) with the $\nabla_w\nabla_l f$
term from (\ref{eq3.11}), we obtain
\begin{equation}
\label{eq3.12}
\begin{split}
u\nabla_v\big \vert_S \theta_f
=&\,\, -\Delta_S u +2s^AD_A u + \bigg [ \frac12 R_S
+\frac12 \theta\kappa -G^f_{\mu\nu}l^{\mu}w^{\nu}+ \frac12
\left ( k^{\mu}l^{\nu} +l^{\mu}k^{\nu} \right ) \nabla_{\mu} \nabla_{\nu}f\\
&\,\, +\frac12\nabla_kl\cdot\nabla f+\frac12\square f
-VK_{\mu AB}K_{\nu}{}^{AB}l^{\mu}l^{\nu} -s_As^A+D_A s^A \bigg ]_S u\\
=&\,\, -\Delta_S u +2s^AD_A u + \bigg [ \frac12 R_S
+\frac12 \theta\kappa -G^f_{\mu\nu}l^{\mu}w^{\nu}
+ h^{\mu\nu} \nabla_{\mu} \nabla_{\nu}f\\
&\,\, +\frac12\nabla_kl\cdot\nabla f-\frac12\square f
-VK_{\mu AB}K_{\nu}{}^{AB}l^{\mu}l^{\nu} -s_As^A+D_A s^A \bigg ]_S u\ ,
\end{split}
\end{equation}
using that $\frac12 \left ( k^{\mu}l^{\nu} +l^{\mu}k^{\nu} \right )
\nabla_{\mu} \nabla_{\nu}f=h^{\mu\nu}\nabla_{\mu}\nabla_{\nu}f
-g^{\mu\nu}\nabla_{\mu}\nabla_{\nu}f
=h^{\mu\nu}\nabla_{\mu}\nabla_{\nu}f-\square f$.

The $h^{\mu\nu}\nabla_{\mu}\nabla_{\nu} f$ term
can also be expanded. It yields
\begin{equation}
\label{eq3.13}
\begin{split}
h^{\mu\nu}\nabla_{\mu}\nabla_{\nu} f =&\,\, h^{\mu\nu} \nabla_{\mu}
\left [ g_{\nu\alpha}\nabla^{\alpha}f\right ]
=h^{\mu\nu} \nabla_{\mu}
\left [ \left ( h_{\nu\alpha}-\frac12 \left ( l_{\nu}k_{\alpha}
+k_{\nu}l_{\alpha} \right ) \right ) \nabla^{\alpha} f\right ]\\
=&\,\, \Delta_S f -\frac12 \left [ \left ( h^{\mu\nu} \nabla_{\mu}l_{\nu}
\right ) \nabla_k f +\left ( h^{\mu\nu} \nabla_{\mu}k_{\nu} \right )
\nabla_l f \right ]\\
=&\,\, \Delta_S f +\frac12 \theta \nabla_k f+\frac12 \kappa\nabla_l f\\
=&\,\, \Delta_S f +\nabla_k f \nabla_l f +\frac12 \theta_f \nabla_k f
+\frac12 \kappa_f\nabla_l f\ .
\end{split}
\end{equation}

To make further progress, we define
\begin{equation}
\label{eq3.14} K^f_{\mu AB}:= K_{\mu AB} - \frac{1}{m}h_{AB}
\nabla_{\mu} f\ ,
\end{equation}
where $m=n-2={\rm rank} (h_{AB})$. This will not be used to replace
the $K_{\mu AB}K_{\nu}{}^{AB}l^{\mu}l^{\nu}$ term, which is already
a square. Rather, we take the trace and define
\begin{eqnarray}
\label{eq3.15} H^f_{\mu}&:=&H_{\mu}-\nabla_{\mu} f\ , \\
\label{eq3.16}\theta_f&:=& H^f\cdot l\ ,\\
\label{eq3.17}\kappa_f&:=& H^f\cdot k \ ,
\end{eqnarray}
so $\theta_f$ and $\kappa_f$ are the $f$-modified expansions of the
$l$ and $k$ null congruences, respectively. Then observe that the
$\frac12\theta\kappa$ term  in (\ref{eq3.12}) becomes
\begin{equation}
\label{eq3.18} \frac12 \theta\kappa = \frac12 \left (
\theta_f+\nabla_l f \right )
\left ( \kappa_f+\nabla_k f \right ) \ .
\end{equation}

Inserting (\ref{eq3.13}) and (\ref{eq3.18}) into (\ref{eq3.12}),
we obtain
\begin{equation}
\label{eq3.19}
\begin{split}
u\nabla_v\big \vert_S \theta_f =&\,\, -\Delta_S u +2s^AD_A u
+ \bigg [ \frac12 R_S+\frac32 \nabla_l f \nabla_k f-G^f_{\mu\nu}l^{\mu}w^{\nu}
-\frac12 \square f\\
&\,\, -VK_{\mu AB}K_{\nu}{}^{AB}l^{\mu}l^{\nu} -s_As^A+D_A s^A +\Delta_S f\\
&\,\, +\frac12 \theta_f\kappa_f+\kappa_f\nabla_l f+\theta_f\nabla_k f
+\frac12 \nabla_k l \cdot \nabla f\bigg ]_S u\\
=&\,\, -\Delta_S u +2s^AD_A u + \bigg [ \frac12 R_S+\left (
\frac32\nabla_k f+\kappa_f\right ) \nabla_l f -G^f_{\mu\nu}l^{\mu}w^{\nu}
-\frac12 \square f\\
&\,\, -VK_{\mu AB}K_{\nu}{}^{AB}l^{\mu}l^{\nu} -s_As^A+D_A s^A +\Delta_S f
\bigg ]_S u \ .
\end{split}
\end{equation}
where in the last equality we used that $\nabla_k l \big\vert_S=0$
by construction (see \cite{AMS}) and $\theta_f\big\vert_S=0$. Thus
we obtain the \emph{$f$-stability operator}
\begin{eqnarray}
\label{eq3.20} {\tilde L}^f_w\psi&:=&\left [ L^f_w -\frac12 \square f
+\left ( \frac32\nabla_k f+\kappa_f\right ) \nabla_l f\right ]\psi\\
\label{eq3.21} L^f_w \psi&:=&-\Delta_S \psi +2s^AD_A \psi + \bigg [
\frac12 R_S -G^f_{\mu\nu}l^{\mu}w^{\nu}-VK_{\mu AB}K_{\nu}{}^{AB}l^{\mu}l^{\nu}\\
&& \nonumber -s_As^A+D_A s^A +\Delta_S f
\bigg ]_S \psi \ .
\end{eqnarray}
Here $L_w^f$ is the stability operator $L_w$ of \cite[equation
(5)]{AMS} with $G_{\mu\nu}$ replaced by $G^f_{\mu\nu}$ and with the
divergence term $D_as^A$ modified to become $D_A s^A+\Delta_S
f=D_A\left (s^A+D^Af\right )$ (as well, \cite{AMS} writes $L_v$, not $L_w$). It is noted in \cite{AMS} that
operators of this form, though not self-adjoint, have a real
\emph{principal eigenvalue} $\lambda_0\le {\rm Re}(\lambda)$ where
$\lambda$ is any other eigenvalue, and the eigenfunction
corresponding to $\lambda_0$ is positive. The operator ${\tilde
L}_v^f$ is further modified as in (\ref{eq3.20}) to produce the
$f$-stability operator.

\begin{proof}[Proof of Theorem \ref{theorem1.3}]
Under assumptions (\ref{eq1.9}, \ref{eq1.10}), we have
\begin{eqnarray}
\label{eq3.22} {\tilde L}^f_w\psi &\le& L^f_v\psi
\le-\Delta_S\psi +2s^AD_A\psi  +
\left [ Q^f -s_As^A+D_A s^A \right ] \psi\\
\label{eq3.23}Q^f&:=&Q+\Delta_S f:=
\frac12 R_S -G^f_{\mu\nu}l^{\mu}w^{\nu}-VK_{\mu AB}K_{\nu}{}^{AB}l^{\mu}l^{\nu}
+\Delta_S f\ ,
\end{eqnarray}
when $\psi$ is a positive function. Following the reasoning in
\cite{GS}, we note that the surface $\Sigma$ cannot be $f$-trapped
unless the principal eigenvalue, say $\lambda_1$, of the operator on the right-hand
side of (\ref{eq3.22}) is positive. If $\phi_1>0$ belongs to the eigenspace (we can arrange that $\phi_1$ is positive), then using $\psi=\phi_1$ in (\ref{eq3.22}) and writing $u=\log\phi_1$, we get
\begin{equation}
\label{eq3.24} -\Delta_S u +Q+D_As^A-\vert s-D u \vert^2\ge 0\ ,
\end{equation}
where we have completed the square on the terms $2s^AD_A\psi-s^As_A$
in (\ref{eq3.22}).

If we fix the spacetime dimension be $n=4$ and integrate (\ref{eq3.24})
over $S$, which is now a closed $2$-surface $S$, and use of the
divergence theorem, then we obtain $\int_S R_S dS\ge 0$. Thus the
Euler characteristic of $S$ is nonnegative and is positive unless
the conditions $G_{\mu\nu}^fl^{\mu}w^{\nu}$, $s=Du$, and $K_{\mu AB}l^{\mu}=0$
all hold pointwise on $S$ and for all null $l$ and timelike $w$. Hence, $S$ is either a $2$-sphere or a $2$-torus.

It remains to prove that in the latter case, the induced metric is $e^{2f}\delta$. To do so, we follow the argument \cite{GS}, which is given in arbitrary dimension. We maintain arbitrary dimension for as long as possible, to gain inisight into that case.

Multiplying (\ref{eq3.24}) by $\psi^2$ for $\psi\in C^{\infty}(S)$ and using some simple identities, we have
\begin{equation}
\label{eq3.25}
\begin{split}Q^f\psi^2-\vert s-Du\vert^2 \ge& -\psi^2 D\cdot (s-Du)\\
=&-D\cdot\left ( \psi^2 (s-Du)\right )+2\psi (D\psi)\cdot (s-Du)\\
\ge & -D\cdot\left ( \psi^2 (s-Du)\right )-2|\psi|\vert D\psi\vert \vert s-Du \vert\\
\ge & -D\cdot\left ( \psi^2 (s-Du)\right )-\vert D\psi \vert^2 - \vert s-Du \vert^2 \psi^2\ .
\end{split}
\end{equation}
Re-arranging terms and integrating, we obtain
\begin{equation}
\label{eq3.26}\int_S \left ( Q^f \psi^2 +\vert D\psi \vert^2 \right ) \ge 0
\end{equation}
for any $\psi\in C^{\infty}(S)$. Now by the Rayleigh formula, the lowest eigenvalue ${\hat \lambda}_1$ of the self-adjoint operator ${\hat L}= -\Delta_S+Q^f$ is given by
\begin{equation}
\label{eq3.27}{\hat \lambda}_1=\inf_{\psi}\frac{\int_S \left ( Q^f \psi^2 +\vert D\psi \vert^2 \right )}{\int_S \psi^2}\ ,
\end{equation}
where the infimum is over all $\psi\in C^{\infty}(S)\backslash \{ 0 \}$ (i.e., excluding the zero function). Thus, by (\ref{eq3.26}), we see that ${\hat \lambda}_1 \ge 0$. Let ${\hat \phi}_1$ denote a corresponding eigenfunction, chosen so that ${\hat \phi}_1>0$.

As above, let $h_{AB}$ be the metric induced by $g_{\mu\nu}$ on $S$.
\begin{equation}
\label{eq3.28}{\hat h}_{AB}:=\varphi^{2/(n-3)}h_{AB}
:=\left ( e^{-f}{\hat \phi}_1 \right )^{2/(n-3)} h_{AB}\ .
\end{equation}
The scalar curvature ${\hat R}_S$ of ${\hat h}$ is given in terms of the scalar curvature $R_S$ of ${\hat h}$ by a standard formula:
\begin{equation}
\label{eq3.29}
\begin{split}
{\hat R}_S=&\,\, \varphi^{-\frac{2}{(n-3)}} \left \{ R_S-\frac{2}{\varphi}
\Delta_S\varphi +\left ( \frac{n-2}{n-3}\right ) \frac{|D\varphi |^2}{\varphi^2}
\right \}\\
=&\,\, \left ( e^{-f}{\hat \phi}_1 \right )^{-\frac{2}{(n-3)}} \left [ R_S
+2\Delta_S f -2\frac{\Delta_S {\hat \phi}_1}{{\hat \phi}_1}
+2 \frac{ \vert D{\hat \phi}_1 \vert^2}{{\hat \phi}_1^2} -\left (
\frac{n-4}{n-3} \right ) \left \vert \frac{D{\hat \phi}_1}{{\hat \phi}_1}
-Df\right \vert^2 \right ]\\
=&\,\, \left ( e^{-f}{\hat \phi}_1 \right )^{-\frac{2}{(n-3)}} \left [ 2{\hat \lambda}_1+2G^f(l,w)+2V \vert K(l)\vert^2
+2 \frac{ \vert D{\hat \phi}_1 \vert^2}{{\hat \phi}_1^2} -\left (
\frac{n-4}{n-3} \right ) \left \vert \frac{D{\hat \phi}_1}{{\hat \phi}_1}
-Df\right \vert^2 \right ] \ ,
\end{split}
\end{equation}
where in the middle equality we used that $\varphi=e^{-f}{\hat \phi}_1$  and in the final equality we used that $-\Delta {\hat \phi}_1+Q^f{\hat \phi}_1= {\hat \lambda}_1{\hat \phi}_1$. As well, we used (\ref{eq3.23}) and defined $\vert K(l)\vert^2 :=K_{\mu AB} K_{\nu}{}^{AB}l^{\mu}l^{\nu}$. Now ${\hat \lambda}_1\ge 0$. Thus, when $n=4$ and when the energy condition (\ref{eq1.7}) holds, we see that ${\hat R}_S\ge 0$ pointwise. Then $\chi(S)=0\Rightarrow {\hat R}_S\ge 0$,which in turn implies that ${\hat \lambda}_1=0$, $G^f\big\vert_S(l,w)=0$ for every future-null $l$ and future-timelike $w$, $K_{\mu AB}\big\vert_S l^{\mu}=0$
for every future-null $l$ outbound from $S$, and $D{\hat \phi}_1=0$. But then the middle line in (\ref{eq3.29}) collapses to
\begin{equation}
\label{eq3.30} 0=R_S+2\Delta_S f\ .
\end{equation}
But in $2$-dimensions, $e^{2f}\left ( R_S+\Delta_Sf \right )$ is the scalar curvature of the metric $e^{-2f}g$, and so $e^{-2f}g$ is a flat metric.
\end{proof}

\section{Scalar-tensor theory}
\setcounter{equation}{0}

\noindent The Brans-Dicke theory \cite{BD, Faraoni} is actually a
family of theories in $4$ spacetime dimensions parametrized by a real number $\omega\in \left (-\frac32,\infty\right )$, and containing a metric $g_{ij}$
and scalar field $\varphi>0$ as ``gravitational variables''. For a
given non-gravitational stress-energy tensor $T_{ij}$ whose trace is
$T:=g^{ij}T_{ij}$ and for a fixed value of $\omega$ these variables are solutions of the system
\begin{eqnarray}
\label{eq4.1} G_{ij}&=&\frac{1}{\varphi}\left ( \nabla_i\nabla_j\varphi
-g_{ij}\square \varphi \right )+ \frac{8\pi}{\varphi}T_{ij}
+\frac{\omega}{\varphi^2} \left ( \nabla_i \varphi \nabla_j \varphi
-\frac12 g_{ij} \vert \nabla \varphi \vert ^2\right )\ ,\\
\label{eq4.2} \square \varphi &=&\frac{8\pi T}{3+2\omega}\ .
\end{eqnarray}
The form of these equations is of course not invariant under
conformal transformations. This particular form is called the
\emph{Jordan frame} form of the theory in the physics literature.
Using
\begin{equation}
\label{eq4.3} f:=-\log \varphi
\end{equation}
and carrying out some straightforward manipulations including a
substitution using (\ref{eq4.2}), equation (\ref{eq4.1}) can be
brought to the form
\begin{eqnarray}
\label{eq4.4} G^f_{ij} &=& 8\pi e^f \left [ T_{ij}-\frac12 g_{ij}
\frac{T}{\left ( 3+2\omega \right )} \right ] +8\pi (1+\omega) S_{ij}\ ,\\
S_{ij} &=&\frac{1}{8\pi} \left ( \nabla_i f \nabla_j f -\frac12 g_{ij}
\vert \nabla f \vert^2 \right ) \ ,
\end{eqnarray}
where $G^f_{ij}$ is the Bakry-\'Emery Einstein tensor defined in
(\ref{eq3.10}) and $S_{ij}$ is the stress-energy tensor of a free
scalar field. We caution that $\vert \nabla f \vert^2
:=g^{-1}(df,df)$ can be negative since $g$ is Lorentzian.

\begin{proof}[Proof of Theorem \ref{theorem1.4}]
Let $v$ be a future-timelike vector and define $\tau_i:=S_{ij}v^j$.
Then it is easy to verify that $\tau_i\tau^i\le 0$ and $\tau_i
v^i\ge 0$, and as this is the case for any future-timelike $v$, then
$\tau$ is past-causal. This in turn implies that $S_{ij}u^iv^j\ge 0$
for any future-timelike (or, by continuity, future-null) $u$ and
$v$. Thus $S_{ij}$ obeys the \emph{dominant energy condition}. If as
well $T_{\mu\nu}$ obeys condition (\ref{eq1.13}) for all
future-timelike vectors $u$ and $v$, we then have
$G^f_{\mu\nu}u^{\mu}v^{\nu}\ge 0$.

\medskip

\noindent\emph{Proof of (i):} From the last paragraph, it follows
that $S_{ij}$ obeys the null energy condition $S_{ij}l^il^j\ge 0$
for all null vectors $l$. By assumption, $T_{ij}$ also obeys the
null energy condition. Then from (\ref{eq4.4}) and (\ref{eq1.6}), we
see that $R^f_{ij}l^il^j \ge 0$ for all null vectors $l$.
Furthermore, $\varphi\ge C>0\Rightarrow f\le \log (1/C)$ so $f$ is
bounded above. Then, invoking Theorem \ref{theorem1.1}, we see that
any outer $f$-apparent horizon must lie behind an event horizon,
proving part (i) of Theorem \ref{theorem1.4}.

\medskip

\noindent\emph{Proof of (ii):} We see from (\ref{eq2.12}) that the conformal transformation (\ref{eq2.10}) eliminates the Hessian term from the Ricci tensor and thus from the field equation (\ref{eq4.1}). Thus, the Einstein frame metric is the metric ${\tilde g}$ obtained from $g$ using the $n=4$ case of (\ref{eq2.10}). By Remark \ref{remark2.6}, the Jordan frame $f$-trapped and Einstein frame trapped regions correspond. Thus, the boundaries of these regions correspond. Then the result follows from part (i).

\medskip

\noindent\emph{Proof of (iii):} Invoke Theorem \ref{theorem1.2} and
observe that (\ref{eq4.3}) implies that $e^{-f}=\varphi$ so
$A_f[S]=\int_S \varphi dS$.

\medskip

\noindent\emph{Proof of (iv):} Now note that
\begin{equation}
\label{eq4.5} \square f = -\square \log \varphi = -\frac{1}{\varphi}
\square \varphi +\frac{\vert d\varphi \vert^2}{\varphi^2}
= -\frac{8\pi e^{-f} T}{3+2\omega}+\vert df\vert^2 \ge \vert df\vert^2
\end{equation}
when $T\big\vert_S\le 0$ and $\omega\ge -3/2$. Because $|\cdot|$ is a Lorentzian norm, $\vert df\vert^2$ could be negative.
However, it isn't, since $\nabla_l\varphi \big \vert_S=0\Leftrightarrow \nabla_l f \big\vert_S=0 \Rightarrow \vert df\vert^2 \equiv h^{AB}D_AfD_Bf\big
\vert_S -\nabla_l f \nabla_k f \big \vert_S = h^{AB}D_AfD_Bf\big
\vert_S \ge 0$. Thus $\square f\big\vert_S
\ge 0$, and we can invoke Theorem \ref{theorem1.3}.
\end{proof}

\appendix

\section{Bakry-\'Emery topological censorship}
\setcounter{equation}{0}

\noindent We recall that the principle of topological censorship
governs the topology of black hole horizons:

\begin{proposition}\label{propositionA.1}
Consider a $4$-dimensional, asymptotically flat spacetime with a
globally hyperbolic connected component ${\mathcal D}$ of the domain
of outer communications and non-empty event horizon. If the
principle of topological censorship holds connected component of the
event horizon is a $2$-sphere.
\end{proposition}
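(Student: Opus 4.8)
The plan is to reduce the statement to a purely topological fact about simply connected $3$-manifolds, exploiting that the hypothesis --- the principle of topological censorship --- is precisely the input that forces $\mathcal{D}$ to be simply connected. First I would unpack that hypothesis. Topological censorship asserts that every causal curve with past endpoint on $\mathcal{I}^-$ and future endpoint on $\mathcal{I}^+$ is fixed-endpoint homotopic to a curve lying in a prescribed neighbourhood of $\mathcal{I}$. As in \cite{FSW}, this forces the domain of outer communications $\mathcal{D}$ to be simply connected: any loop in $\mathcal{D}$ can be slid out to the asymptotic region, where the relevant neighbourhood of $\mathcal{I}^+\cong\mathbb{R}\times S^2$ is simply connected, so the loop bounds. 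The working hypothesis therefore becomes that $\mathcal{D}$ is a globally hyperbolic, asymptotically flat, simply connected domain of outer communications, and from here the argument is purely topological --- it uses no energy condition and no further reference to $f$.

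Next I would pass to a spatial slice. Since $\mathcal{D}$ is globally hyperbolic it is homeomorphic to $\mathbb{R}\times\Sigma$ for a Cauchy surface $\Sigma$, so $\pi_1(\Sigma)\cong\pi_1(\mathcal{D})=0$; in particular $\Sigma$ is connected, and it is orientable because the spacetime is oriented and $\Sigma$ is two-sided. I would arrange that $\Sigma$ meets the event horizon $\mathcal{H}=\partial\mathcal{D}$ in a cross-section $S:=\Sigma\cap\mathcal{H}$, a disjoint union of closed orientable $2$-manifolds, so that (the closure of) $\Sigma$ is an asymptotically flat $3$-manifold with a single asymptotic end and interior boundary $S$. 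Compactifying the asymptotic end by gluing a $3$-ball across the sphere at infinity yields a compact orientable $3$-manifold $\hat\Sigma$ with $\partial\hat\Sigma=S$; attaching a ball along a $2$-sphere does not change $\pi_1$ (van Kampen), so $\hat\Sigma$ is still simply connected and in particular $H_1(\hat\Sigma;\mathbb{Q})=0$.

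The conclusion then follows from the ``half lives, half dies'' corollary of Poincar\'e--Lefschetz duality (cf. \cite{CW}): for a compact orientable $3$-manifold $N$, the image of $H_1(\partial N;\mathbb{Q})\to H_1(N;\mathbb{Q})$ has dimension $\tfrac12\dim H_1(\partial N;\mathbb{Q})$. Taking $N=\hat\Sigma$ with $H_1(\hat\Sigma;\mathbb{Q})=0$ gives $b_1(S)=0$, and since $S$ is a closed orientable surface, each connected component has genus zero, i.e. is a $2$-sphere. As the event horizon is the null hypersurface ruled by the generators through $S$, it is topologically $\mathbb{R}\times S$, so every connected component is $\mathbb{R}\times S^2$, which is the asserted spherical topology.

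I expect the main obstacle to be the geometric step of producing an admissible slice: one must justify that $\Sigma$ can be chosen to meet $\mathcal{H}$ in a compact cross-section and to have exactly one regular asymptotically flat end with $S$ as its only interior boundary, so that the compactification $\hat\Sigma$ is legitimate and $\partial\hat\Sigma$ is genuinely the horizon cross-section rather than a larger set. This is where asymptotic flatness and global hyperbolicity of $\mathcal{D}$, together with the structure of $\partial I^-(\mathcal{I}^+)$, must be invoked with care. Everything particular to the present setting enters only through the companion result of this appendix, which guarantees that topological censorship holds in the Bakry-\'Emery setting; once that hypothesis is in hand, the deduction above is completely independent of $f$ and of any energy condition, and is moreover insensitive to the conformal rescaling (\ref{eq2.10}) since the event horizon and the topology of its cross-sections are conformally invariant.
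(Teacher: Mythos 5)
Your proposal is correct and is essentially the same argument the paper invokes: the paper's entire proof of Proposition \ref{propositionA.1} is the citation ``See \cite{CW, GSWW}'', and your route (topological censorship $\Rightarrow$ $\pi_1({\mathcal D})=0$ $\Rightarrow$ spherical horizon cross-sections, via a compactified Cauchy slice and the half-lives--half-dies consequence of Poincar\'e--Lefschetz duality) is precisely the Chru\'sciel--Wald argument, with the slice-construction delicacy correctly flagged.

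One caveat on the first step. Your gloss that ``any loop in ${\mathcal D}$ can be slid out to the asymptotic region'' is not the actual mechanism and would not survive being made rigorous: the principle of topological censorship constrains only \emph{causal} curves from ${\mathcal I}^-$ to ${\mathcal I}^+$, so it gives no direct handle on an arbitrary loop, and nothing licenses the ``sliding''. The correct deduction (due to Galloway, and in effect the content of \cite{CW}) passes to the universal cover $\tilde{\mathcal D}$: topological censorship forbids causal curves joining distinct lifts of ${\mathcal I}$, so the domains $I^+(\tilde{\mathcal I}^-_\alpha)\cap I^-(\tilde{\mathcal I}^+_\alpha)$ associated with the various copies of ${\mathcal I}$ are disjoint open sets covering the connected manifold $\tilde{\mathcal D}$; hence there is a single copy of ${\mathcal I}$, and since a neighbourhood of ${\mathcal I}$ is simply connected, the covering is trivial, i.e.\ $\pi_1({\mathcal D})=0$. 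Since this implication is a citable theorem, your proof stands with that substitution; the remaining steps ($H_1(\hat\Sigma;{\mathbb Q})=0$ after capping the end, and duality forcing $b_1(S)=0$) are sound as written.
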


The \emph{principle of topological censorship}, also called
\emph{active topological censorship}, states that any causal curve
in an asymptotically flat domain of outer communications ${\mathcal
D}$ and beginning and ending on ${\mathcal I}$ is fixed-endpoint
homotopic to a causal curve on ${\mathcal I}$. A theorem \cite{FSW}
states that the principle holds whenever a globally hyperbolic
${\mathcal D}$ obeys the \emph{null energy condition} or the weaker
\emph{averaged null energy condition} (ANEC).\footnote
{For other formulations, including modified formulations for
asymptotically anti-de Sitter spacetimes, see \cite{GSWW}.}

\begin{proof}[Proof of Proposition \ref{propositionA.1}]
See \cite{CW, GSWW}.\end{proof}

The ANEC
may be stated as the requirement that
\begin{equation}
\label{eqA.1} \int\limits_{-\infty}^{\infty}
{\rm Ric}(\eta',\eta')d\lambda\ge 0
\end{equation}
along every affinely parametrized complete null geodesic
$\eta:(-\infty,\infty)\to M$, where the integral in (\ref{eqA.1}) is
taken over an affine parameter. In (\ref{eqA.1}), by ``$\ge 0$'', we
include the case that the integral diverges, provided in the latter
case the improper integral in (\ref{eqA.1}), when replaced by an
integral over $[a,b]$, is nonnegative for all choices of $a$ and $b$
with $a\le -c$, $b \ge c$, for some $c$.

If we replace the Ricci tensor in (\ref{eqA.1}) by the Bakry-\'Emery
tensor and use that $\eta$ is geodesic so that
$\nabla_{\eta'}\eta'=0$, then the left-hand side is replaced by
\begin{equation}
\label{eqA.2} \int\limits_{-\infty}^{\infty}{\rm Ric}^f(\eta',\eta')d\lambda
= \int\limits_{-\infty}^{\infty}{\rm Ric}(\eta',\eta')d\lambda+\nabla_{\eta'}
f \big\vert^{\infty}_{-\infty}\ .
\end{equation}
Indeed, more generally, given a vector field $w$ we can use instead
the \emph{harmonic Ricci tensor}
\begin{equation}
\label{eqA.3}{\rm Ric}_w[g]:={\rm Ric}[g]+\frac12 \pounds_w g\ ,
\end{equation}
which reduces to the Bakry-\'Emery tensor when $w=\nabla f$, and
then
\begin{equation}
\label{eqA.4} \int\limits_{-\infty}^{\infty}{\rm Ric}_w(\eta',\eta')d\lambda
= \int\limits_{-\infty}^{\infty}{\rm Ric}(\eta',\eta')d\lambda+\eta'\cdot w
\big\vert^{\infty}_{-\infty}\ .
\end{equation}
This suggests the following definition, which we formulate so as to
include both infinite and semi-infinite geodesics:
\begin{definition}\label{definitionA.2}
Given a future-directed causal geodesic $\eta:I\to M$ with
$I=(-\infty,\infty)$, we say a vector field $w$ is \emph{net
decreasing with respect to $\eta$} if $\lim_{\lambda\to +\infty}
\eta'(\lambda)\cdot (w\circ \eta)(\lambda)\le \lim_{\lambda\to
-\infty} \eta'(\lambda)\cdot (w\circ \eta)(\lambda)$. If instead
$I=[0,\infty)$, we say a vector field $w$ is \emph{net decreasing
with respect to $\eta$} if $\lim_{\lambda\to +\infty}
\eta'(\lambda)\cdot (w\circ\eta)(\lambda)\le \eta'(0)\cdot (w\circ
\eta)(0)$.
\end{definition}
%

%
Net decreasing vector fields include various special cases of
interest. For example, if $w$ vanishes on a connected component
${\mathcal I}$ of conformal infinity then it is net decreasing along
any causal geodesic $\eta:(-\infty,\infty)\to M$ beginning and
ending on ${\mathcal I}$. In an asymptotically flat spacetime, $w$
is also net decreasing along any $\eta:(-\infty,\infty)\to M$ if it
is future-causal at ${\mathcal I}^+$ and past-causal at ${\mathcal
I}^-$. In cases such as those just discussed, we say that \emph{$w$
is net decreasing on the domain of outer communications ${\mathcal
D}$}. If ${\mathcal D}$ has a Cauchy surface $\Sigma$ such that $w$
is past-causal (including possibly vanishing) along it and
future-causal (again, possibly vanishing) at ${\mathcal I}^+$, then
$w$ is net decreasing along any null geodesic $\eta:[0,\infty)\to M$
from $\eta(0)\in \Sigma$ to ${\mathcal I}^+$, and then we say that
\emph{$w$ is net decreasing in the future development of $\Sigma$}.
%

Therefore, along any geodesic $\eta:I\to M$ along which $w$ is net
decreasing, we have
\begin{equation}
\label{eqA.5} \int_I{\rm Ric}(\eta',\eta')d\lambda \ge
\int_I{\rm Ric}_w(\eta',\eta')d\lambda\ .
\end{equation}
We are thus led to a second definition:

\begin{definition} \label{definitionA.3}
We say that the \emph{$w$-averaged null energy condition} (or
\emph{$w$-ANEC}) is obeyed along an infinite or semi-infinite null
geodesic $\eta:I\to M$, $I=[0,\infty)$ or $I=(-\infty,\infty)$, if,
for a vector field $w$ defined along $\eta$, we have
\begin{equation}
\label{eqA.6} \int_I{\rm Ric}_w(\eta',\eta')d\lambda\ge 0\ .
\end{equation}
We say that a spacetime $(M,g)$ obeys the \emph{$w$-averaged null
energy condition} if the $w$-ANEC is obeyed along every null
geodesic $\eta:(-\infty,\infty)\to M$. We say that a spacetime
$(M,g)$ obeys the \emph{$w$-averaged null energy condition to the
future of a Cauchy surface $\Sigma$} if
\begin{equation} \label{eqA.7} \int\limits_{0}^{\infty}{\rm
Ric}_w(\eta',\eta')d\lambda\ge 0
\end{equation}
along all complete future null geodesics $\eta:[0,\infty)\to M$ such
that $\eta(0)\in \Sigma$. In the special case that $w$ is a gradient
vector field $w=\nabla f$, we also say that $(M,g)$ obeys the
\emph{$f$-averaged null energy condition} (or \emph{$f$-ANEC}).
\end{definition}

Note that these conditions reduce to the familiar ANEC or averaged
null energy conditions when $w\equiv 0$. Also note that (\ref{eq2.2}) implies (\ref{eqA.6}) or (\ref{eqA.7}) (keeping in mind the remarks after (\ref{eqA.1})).
Combining these definitions, we immediately have the following
lemma:
\begin{proposition}\label{propositionA.4}
Let $w$ be a net decreasing vector field on a domain of outer
communications ${\mathcal D}$ of an asymptotically flat spacetime
$(M,g)$ which is causally continuous at spatial infinity $i^0$ and
such that every null geodesic visible from ${\mathcal I}^+$ is
future-complete in $(M,g)$. If the generic curvature condition (as
defined in \cite{GW} or \cite{HE}) and $w$-weighted ANEC hold along
every null geodesic $\eta:(-\infty,\infty)\to {\mathcal D}$, then
the topological censorship theorem, in the form of \cite[Theorem
2]{GW}, holds on ${\mathcal D}$.
\end{proposition}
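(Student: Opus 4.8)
The plan is to reduce the statement directly to the topological censorship theorem in the form of \cite[Theorem 2]{GW}, whose hypotheses coincide with those assumed here except that \cite{GW} requires the ordinary ANEC (\ref{eqA.1}) in place of the $w$-weighted ANEC (\ref{eqA.6}). Thus the entire content of the argument is to show that, under the net-decreasing hypothesis on $w$, the $w$-ANEC forces the ordinary ANEC along every null geodesic $\eta:(-\infty,\infty)\to{\mathcal D}$, after which the cited theorem applies verbatim.

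First I would invoke the identity (\ref{eqA.4}): since $\eta$ is an affinely parametrized geodesic with $\nabla_{\eta'}\eta'=0$, the difference between the two averaged integrals collapses to the boundary term $\eta'\cdot w\big\vert_{-\infty}^{\infty}$. By Definition \ref{definitionA.2}, $w$ being net decreasing along $\eta$ says precisely that $\lim_{\lambda\to+\infty}\eta'\cdot w\le\lim_{\lambda\to-\infty}\eta'\cdot w$, i.e. this boundary term is nonpositive. Hence inequality (\ref{eqA.5}) holds, and combining it with the $w$-ANEC hypothesis (\ref{eqA.6}) gives
\begin{equation*}
\int_I{\rm Ric}(\eta',\eta')\,d\lambda\ge\int_I{\rm Ric}_w(\eta',\eta')\,d\lambda\ge 0\ ,
\end{equation*}
which is exactly the ordinary ANEC (\ref{eqA.1}) along $\eta$.

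With the ordinary ANEC now in force along every null geodesic $\eta:(-\infty,\infty)\to{\mathcal D}$, each remaining hypothesis of \cite[Theorem 2]{GW}---asymptotic flatness, causal continuity at $i^0$, future-completeness of the null geodesics visible from ${\mathcal I}^+$, and the generic curvature condition---is assumed directly in the statement. Applying that theorem then yields topological censorship on ${\mathcal D}$.

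The subtlest point, rather than a genuine obstacle, will be the treatment of possibly divergent averaged integrals. There the boundary term in (\ref{eqA.4}) and the inequalities above must be interpreted through the partial-integral convention spelled out in the remarks following (\ref{eqA.1}); with that convention the chain of inequalities holds uniformly over all truncations $[a,b]$ with $a\le -c$, $b\ge c$, so the reduction goes through unchanged. In effect the proposition is a bookkeeping statement: a net-decreasing $w$-weighting can always be absorbed into the ANEC hypothesis of the standard censorship theorem.
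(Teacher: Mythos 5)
Your proposal is correct and takes essentially the same approach as the paper: both reduce the $w$-ANEC to the ordinary ANEC by observing that for a net decreasing $w$ the boundary term in (\ref{eqA.4}) is nonpositive, so that (\ref{eqA.5}) combined with (\ref{eqA.6}) yields (\ref{eqA.1}), and then invoke \cite[Theorem 2]{GW}. The only cosmetic difference is that the paper inserts one intermediate step---\cite[Remark 2.2]{GW}, by which ANEC together with the generic condition forces a pair of conjugate points on every complete null geodesic in ${\mathcal D}$---since Theorem 2 of \cite{GW} is formulated in terms of that conjugate-point condition rather than taking ANEC as a direct hypothesis, whereas you assume ANEC enters that theorem verbatim; this is a citation-level detail, not a mathematical gap.
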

\begin{proof}
By the assumptions and by (\ref{eqA.5}), we have that ANEC holds in
${\mathcal D}$. By \cite[Remark 2.2]{GW}, then every complete null
geodesic in ${\mathcal D}$ will have a pair of conjugate points.
Then the conditions of \cite[Theorem 2]{GW} hold. \end{proof}

This version of topological censorship is sometimes viewed as being
less than fully satisfactory because of its reliance on the generic
curvature condition. We can remove this assumption, but then we must
instead assume that $w$ is net decreasing from a Cauchy surface and
that the $w$-ANEC condition holds to the future of that Cauchy
surface. This then yields the version of topological censorship
found in \cite{FSW}, see also \cite{GSWW}, with the ANEC assumption
replaced by the $w$-ANEC condition and the net decreasing condition
for $w$.

\end{document}